\documentclass[aps,superscriptaddress,twocolumn,showpacs]{revtex4-2}

\usepackage{amsmath}
\usepackage{latexsym}
\usepackage{amssymb}
\usepackage{graphicx}
\usepackage[colorlinks=true, citecolor=blue, urlcolor=blue]{hyperref}
\usepackage{float}
\usepackage{amsfonts}
\usepackage{textcomp}
\usepackage{mathpazo}
\usepackage{comment}
\usepackage{blindtext}
\usepackage[dvipsnames]{xcolor}

\sloppy

\usepackage{bbm}

\usepackage{xcolor}
\definecolor{myurlcolor}{rgb}{0,0,0.4}
\definecolor{mycitecolor}{rgb}{0,0.5,0}
\definecolor{myrefcolor}{rgb}{0.5,0,0}
\usepackage{hyperref}
\hypersetup{colorlinks,
linkcolor=myrefcolor,
citecolor=mycitecolor,
urlcolor=myurlcolor}

\usepackage[draft]{fixme}
\usepackage{amsmath,bbm}
\usepackage{scrextend}
\usepackage{graphicx}
\usepackage{amsfonts}
\usepackage{amssymb}
\usepackage{amsmath,amssymb,amsthm,verbatim,graphicx,bbm}
\usepackage{mathrsfs}
\usepackage{mathtools}
\usepackage{color,xcolor,longtable}






\def\be{\begin{equation}}
\def\ee{\end{equation}}
\def\ben{\begin{eqnarray}}
\def\een{\end{eqnarray}}
\def\eea{\end{array}}
\def\bea{\begin{array}}

\newcommand{\Tr}[1]{\mathrm{Tr}#1}
\newcommand{\bei}{\begin{itemize}}
\newcommand{\eei}{\end{itemize}}
\newcommand{\ket}[1]{|#1\rangle}
\newcommand{\bra}[1]{\langle#1|}

\newcommand{\proj}[1]{\ket{#1}\!\bra{#1}}

\newcommand{\I}{\mathbbm{1}}

\renewcommand{\emph}[1]{\textbf{#1}}


\makeatletter
\newtheorem*{rep@theorem}{\rep@title}
\newcommand{\newreptheorem}[2]{%
\newenvironment{rep#1}[1]{%
 \def\rep@title{#2 \ref{##1}}%
 \begin{rep@theorem}}%
 {\end{rep@theorem}}}
\makeatother

\theoremstyle{plain}
\newtheorem{thm}{Theorem}
\newtheorem*{thm*}{Theorem}
\newreptheorem{thm}{Theorem}

\newtheorem{fakt}{Fact}

\theoremstyle{definition}

\theoremstyle{remark}


\usepackage[T1]{fontenc}


\begin{document}


\title{Any gate of a quantum computer can be certified device-independently}
\author{Shubhayan Sarkar}
\email{shubhayan.sarkar@ulb.be}
\affiliation{Laboratoire d’Information Quantique, Université libre de Bruxelles (ULB), Av. F. D. Roosevelt 50, 1050 Bruxelles, Belgium}
\affiliation{Institute of Informatics, Faculty of Mathematics, Physics and Informatics,
University of Gdansk, Wita Stwosza 57, 80-308 Gdansk, Poland}

\begin{abstract}	
Device-independent (DI) certification allows the verification of quantum systems based solely on observed statistics, without assumptions about their internal structure. While self-testing, the strongest DI certification, of a wide range of quantum states and measurements is done, the self-testing of quantum operations remains underdeveloped. Here, we show in a proof-of-principle way that any quantum unitary can be self-tested within the DI paradigm. For our purpose, we utilise the framework of quantum networks with multiple independent sources. Our work provides a fundamental step toward certifying quantum interactions directly from data, without detailed modelling assumptions. Moreover, the result also serves as a crucial ingredient for quantum computation, where verifying that quantum gates perform as intended is essential for building secure and reliable quantum processors.
\end{abstract}


\maketitle


\textit{Introduction.---} Devices based on quantum particles can do fascinating things like solving problems much faster than traditional computers and making communication more secure than ever. However, before we can reliably use them, we need to ensure that they behave in an intended manner. A way to do this is to either know the internal workings of the device or certify them. Traditionally, this meant trusting everything about how the device was built and how it works inside, which, given the current state of counterfeiting and cheating, is not at all realistic. Device-independent (DI) certification takes an alternative route which allows one to confirm the inner workings of a quantum device just by looking at the outputs it produces, without needing any additional information about the device apart from some minimal physical assumptions. The key ingredient for DI certification is Bell nonlocality \cite{Bell,Bell66}.

The strongest DI certification scheme is known as self-testing. Introduced in \cite{Mayers_selftesting, Yao}, self-testing enables unique certification of both quantum states and measurements, up to some degrees of freedom, using only the observed data, assuming nothing about the inner workings of the device except that it obeys the laws of quantum physics. In recent times, enormous effort has been devoted to certify quantum states, for instance, see Refs. \cite{Scarani,Reichardt_nature,Mckague_2014,Wu_2014,Bamps,All,chainedBell, Projection,Jed1,sarkar,sarkaro2,Allst,sarkarPRL} and quantum measurements \cite{Jed1, random1, sarkar, chen1, Marco, JW2, NLWEsupic, sarkarPRL, sarkar2024universal}. In particular, Ref. \cite{Allst} proposes a quantum-networks-based scheme to self-test any pure multipartite entangled state. An indirect way to self-test a mixed entangled state, in particular, a bound entangled state, was recently proposed in Ref. \cite{sarkarPRL}. This result was then generalised to any state and any measurement in Ref. \cite{sarkar2024universal}. 

While self-testing of quantum states and measurements has seen significant progress, self-testing quantum operations remains in its infancy, with only a handful of results to date. For instance, schemes to self-test single unitaries \cite{Dall} and a few entangling unitary gates have been proposed in \cite{pavel, sarkarint}. In this work, we introduce the first general DI scheme to self-test any quantum unitary, marking a significant step beyond earlier efforts focused on specific gates. Since any quantum operation can be represented as a unitary transformation on an extended Hilbert space, self-testing any unitary allows us to certify any interaction between quantum systems in a DI way. This offers a promising alternative to conventional methods of modelling quantum processes, such as Hamiltonian-based approaches in high-energy physics, which often rely on detailed assumptions about the involved particles. From a practical standpoint, this is especially relevant to quantum computing, where every gate is a unitary operation. Before building a quantum processor, it is crucial to verify that the individual gates perform as intended. A DI certification of these unitaries would enable such verification without the need to trust the gate manufacturer, which might serve as a major step toward secure and reliable quantum hardware. This is in contrast to the approaches proposed in \cite{Reichardt2013,Fitzsimons_2018} that self-test quantum computations.


For our purpose, we consider the framework of quantum networks where one has access to multiple independent sources. Utilising this framework, we provide proof-of-principle DI certification, or self-testing, of any quantum unitary. 
We first provide an almost device-independent way to certify any unitary gate. The idea of almost device-independence was introduced in \cite{sarkar3}, where one does not fully trust any of the devices involved in the setup but assumes that the local support of the states remains invariant. Here, we assume that the unitary gate does not change the local support of the state. The setup is based on a recent work \cite{sarkar2024universal}, which was used to certify any quantum state and measurement. Its simple structure provides an intuitive understanding of the certification of a quantum gate. Introducing additional components in this setup, inspired by \cite{Allst, sarkar2024steer}, we can drop the assumption of the invariance of local supports and finally provide a DI certification scheme for any quantum gate.   

Let us begin by describing the simple setup for almost DI certification of any quantum gate. 

\textit{Almost DI certification of any quantum gate---} To certify any unitary quantum operation we consider a quantum network as depicted in Fig. \ref{fig1}. The quantum network consists of $N$ external parties, denoted by parties $A_i$, with $i=1,\ldots, N$, and a central party $E$, and another external party $L$. The $N$ independent sources, denoted by $P_i$, in the network distribute bipartite quantum states among the parties such that each $P_i$ sends one subsystem to $A_i$ and other to $E$. We denote these states as $\rho_{A_i\overline{A}_i}$. 
For convenience, at times we will denote $A\coloneqq A_1\ldots A_N$ and $L\coloneqq\overline{A}_1\ldots\overline{A}_N$. 
On their shares of the joint state, each external party $A_i$ performs three binary measurements, where the inputs and outputs are denoted as $x_i=0,1,2$ and $a_i=0,1$, respectively. 

The central party $E$ receives $N$ subsystems from $P_i$ on which it performs an operation $V$ that can transform the incoming state. However, she can also freely choose not to interact with the incoming states. Consequently, her operation box takes two inputs $e=0,1$, where $e=0$ corresponds to $\I$ and $e=1$ corresponds to $V$. Finally at $L$, a single fixed $2^N-$outcome measurement is performed on the transformed states received from $E$, with the outputs denoted as $l=0,1,\ldots,2^N-1$. None of the parties communicate with each other during the experiment.

After the experiment, the parties reconstruct the probability distribution $\vec{p}=\{p(a_1\ldots a_Nl|x_1\ldots x_Ne)\}$, such that $p(a_1\ldots a_Nl|x_1\ldots x_Ne)$ is the probability of obtaining $\{a_i\}$ by $A_i's$ and $l$ by $L$ after they perform the measurements $\{x_i\}$ and $E$ chooses the transformation $e$, respectively. From Born's rule, we can write
\begin{equation}\label{probs}
    p(a_1\ldots a_Nl|x_1\ldots x_N e)=\Tr\left(\rho_{A\overline{A}} M_{a_i|x_i} \otimes V_e M_{l} V_e^{\dagger}\right),
\end{equation}
where $M_{i,x_i}=\{M_{a_i|x_i}\}$, for every $i$, and $M_l$ are the measurements performed by parties $A_i$ and $L$ with $V_{0}=\I,\ V_1=V$ denoting the transformations by Eve on the subsystem $\rho_{\overline{A}_1\ldots\overline{A}_N}$. Here, $\rho_{A\overline{A}}=\bigotimes_{i=1}^{N}\rho_{A_i\overline{A}_i}$.The measurement elements are positive semi-definite and satisfy $\sum_{a_i}M_{a_i|x_i}=\I_{A_i}$ for every measurement choice $x_i$ and every party $i$. The same applies to $L's$ measurements. Here, we also express the correlations in terms of expectation values of observables $\mathcal{A}_{i,x_i}$ as 
\begin{align}\label{obspic}
    &\langle \mathcal{A}_{1,x_1}\ldots \mathcal{A}_{N,x_N} V_eM_{l}V_e^{\dagger} \rangle=\Tr[(\bigotimes_{i=1}^{N} \rho_{A_i\overline{A}_i}\mathcal{A}_{i,x_i}) \otimes V_e^{\dagger}M_{l}V_e]\nonumber\\
    & = \sum_{a_1,\ldots,a_N=0,1} (-1)^{\sum_{i=1}^N a_i} p(a_1\ldots a_Nl|x_1\ldots x_N e).
\end{align}
Consequently, $\mathcal{A}_{i,x_i}=M_{0|x_i}-M_{1|x_i}$ for every $i=1,\ldots,N$. If all $M_{a_i|x_i}$ are projectors, then $A_{i,x_i}$ is unitary.

\begin{figure}[t!]
    \centering
    \includegraphics[width=\linewidth]{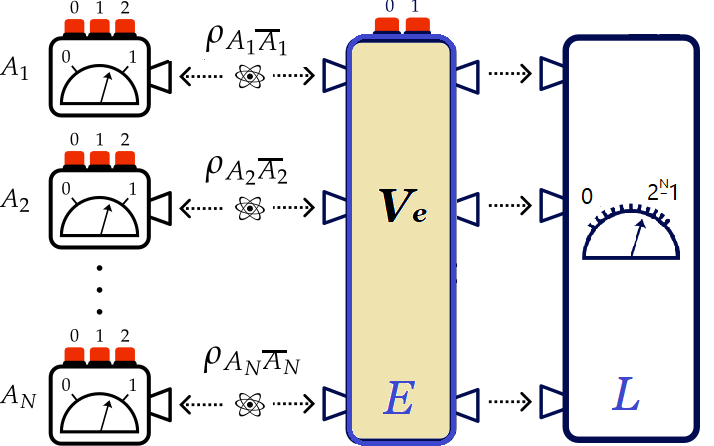}
    \caption{\textbf{Almost DI certification of any quantum gate.} The setup involves \( N+2 \) parties: \( A_i \) for \( i = 1, \ldots, N \), a central party \( E \) and $L$. There are \( N \) independent sources, each distributing a bipartite quantum state \( \rho_{A_i\overline{A}_i} \) between party \( A_i \) and \( E \). Each \( A_i \) has three possible inputs and two outcomes. The central party \( E \) has two inputs that corresponds to two different transformations $V_e$ such that $V_0=\I, V_1=V$, while $L$ performs a single \( 2^N \) outcome measurement.
}
    \label{fig1}
\end{figure}

As described above, the source $P_i$ generates the state $\rho_{A_i\overline{A}_i}$ acting on some Hilbert space $\mathcal{H}_{A_i}\otimes\mathcal{H}_{\overline{A}_i}$. The dimension of these spaces is unrestricted which allows us to consider that $\rho_{A_i\overline{A}_i}$ is pure. Moreover, we also recall that the measurements can only be certified on the support of the local states, and thus, without loss of generality, we take the measurements to be full-rank. In this section, we assume that the unitary $V$ does not change the local support of the states, that is, $V:\mathcal{H}_{\overline{A}_i}\rightarrow \mathcal{H}_{\overline{A}_i}$.  

Suppose now that there are some reference states $\ket{\psi'_{A_i\overline{A}_i}}\in \mathcal{H}_{A'_i}\otimes\mathcal{H}_{\overline{A}'_i}$, measurements $\{\mathcal{M}_{a_i|x_i}\}, \{\mathcal{M}_{l}\}$ and transformation $\mathcal{U}$ also generate $\vec{p}$ [Eq. \eqref{probs}]. We say that these quantum realisations are self-tested by observing $\vec{p}$ in the scenario depicted in Fig. \ref{fig1} if there are local unitary operations  $U_{A_i}:\mathcal{H}_{A_i}\to \mathcal{H}_{A_i'}\otimes\mathcal{H}_{A_i''}$ and $U_{\overline{A}_i}:\mathcal{H}_{\overline{A}_i}\to\mathcal{H}_{\overline{A}_i'}\otimes\mathcal{H}_{\overline{A}_i''}$ such that
\begin{equation}\label{ststate}
 (U_{A_i}\otimes U_{\overline{A}_i})\ket{\psi_{A_i\overline{A}_i}}=\ket{\psi'_{A_i'\overline{A}_i'}}\otimes\ket{\xi_{A_i''\overline{A}_i''}},
\end{equation}
where $\ket{\xi_{A_i''\overline{A}_i''}}\in\mathcal{H}_{A_i''}\otimes\mathcal{H}_{\overline{A}_i''}$ and
\begin{equation}\label{stmea}
U_{A_i}\,M_{i,x_i}\,U_{A_i}^{\dagger}=\mathcal{M}_{i,x_i}\otimes\mathbbm{1}_{A_i''},\quad U_{L}\, M_l\,U_{L}^{\dagger}=\mathcal{M}_l\otimes\I_{L''},
\end{equation}
where $\mathbbm{1}_{A_i''},\mathbbm{1}_{L''}$ is the identity acting on the auxiliary systems $\mathcal{H}_{A_i''},\mathcal{H}_{\overline{A}_i''}$ respectively and $U_{L}=\bigotimes_{i=1}^N U_{\overline{A}_i}$. Notice that we denote $L''=\overline{A}_1''\ldots\overline{A}_N''$. Moreover, the transformation $V$ is certified given some unitary $U$ as
\begin{eqnarray}\label{stU}
    U V U^{\dagger}=\mathcal{U}\otimes\I_{\mathrm{aux}}.
\end{eqnarray}
The above definitions are up to the complex conjugation of reference states, measurements and transformation. However, as we will see below, in this work we have the additional benefit that either the entire set of states, measurements and transformations is certified to be the reference ones or its complex conjugate.
Let us now proceed towards the self-testing of any quantum gate under the assumption that the transformation do not change the support of the local states. The scheme is divided into two parts, corresponding to two different inputs of $E$. 

{\it{Step 1.}} In the first step, we consider correlations $\{p(a_1\ldots a_Nl|x_1\ldots x_N 0)\}$. As $e=0$ corresponds to Eve not interacting with the system, we immediately follow \cite{sarkar2024universal} to self-test the sources and measurements of $A_i$ and $L$. For completeness, we give a brief overview of the scheme in \cite{sarkar2024universal}. First, using the correlations $\{p(a_1\ldots a_Nl|x_1\ldots x_N 0)\}$, a family of $2^N$ Bell inequalities are constructed as $\langle\mathcal{I}_{l}\rangle\leq\beta_C$ where the Bell operators $\mathcal{I}_{l}$ are given by
\begin{eqnarray}
\label{BE1Nm}
\mathcal{I}_{l}=(-1)^{l_1} (N-1)\tilde{\mathcal{A}}_{1,1}\bigotimes_{i=2}^N \mathcal{A}_{i,1} +\sum_{i=2}^N(-1)^{l_i}\tilde{\mathcal{A}}_{1,0}\mathcal{A}_{i,0}\nonumber\\-(-1)^{l_1+l_i}\sum_{i=2}^N  \mathcal{A}_{1,2}\mathcal{A}_{i,2}\bigotimes_{\substack{j=2\\j\ne i}}^{N} \mathcal{A}_{j,1}  \qquad
\end{eqnarray}
where $l\equiv l_1l_2\ldots l_N$ such that $l_1,l_2,\ldots,l_N=0,1$ and
\begin{equation}\label{overAm}
    \tilde{\mathcal{A}}_{1,0}=\frac{(\mathcal{A}_{1,0}-\mathcal{A}_{1,1})}{\sqrt{2}},\qquad\tilde{\mathcal{A}}_{1,1}=\frac{(\mathcal{A}_{1,0}+\mathcal{A}_{1,1})}{\sqrt{2}}.
\end{equation}
The classical bound $\beta_C$, of the above Bell inequalities $\mathcal{I}_l$ \eqref{BE1Nm} for any $l$ is upper bounded as $\beta_C\leqslant (\sqrt{2}+1)(N-1)$. The following observables of the external parties
\begin{eqnarray}\label{GHZObsm}
\mathcal{A}'_{1,0}&=&\frac{X+Z}{\sqrt{2}},\quad \mathcal{A}'_{1,1}= \frac{X-Z}{\sqrt{2}},
\quad \mathcal{A}'_{1,2}=Y, \nonumber\\\mathcal{A}'_{i,0}&=&Z,\quad \mathcal{A}'_{i,1}=X,\quad \mathcal{A}'_{i,2}=Y\quad (i=2,\ldots,N)
\end{eqnarray}
as well as the GHZ-like states
\begin{equation}\label{GHZvecsm}
\ket{\phi_l}=\frac{1}{\sqrt{2}}(\ket{l_1\ldots l_N}+(-1)^{l_{1}}|\overline{l}_1\ldots\overline{l}_N\rangle),
\end{equation}
where $l_1,l_2,\ldots,l_N=0,1$ and $\overline{l}_i=1-l_i$ attain the value $\beta_Q=3(N-1)$ of \eqref{BE1Nm} for any $l$. This is also the maximum value attainable in quantum theory, also referred to as the quantum bound. Notice that, the measurements referred to in Eq. \eqref{GHZObsm} for every party form a two-dimensional tomographically complete set of measurements.

Suppose now that $L$ obtains an outcome $l$ when $e=0$. We impose that the corresponding post-measured state at $A_i's$ attains the quantum bound of $\langle\mathcal{I}_l\rangle$ \eqref{BE1Nm}, that is, we impose $\langle\mathcal{I}_l\otimes M_l\rangle_{\ket{\psi_{A\overline{A}}}}=p(l)\beta_Q$ where $p(l)$ is the probability of obtaining outcome $l$ at $L$. This allows us to self-test the post-measured states and the observables of $A_i's$ to be the reference ones \eqref{GHZvecsm} and \eqref{GHZObsm} according to Eqs. \eqref{ststate} and \eqref{stmea} respectively. Considering that local statistics of $L's$ measurement $p(l)=1/2^N$, one can then self-test the state generated by the sources to be maximally entangled (Eq. \eqref{GHZvecsm} for $N=2, l_i=0$) according to \eqref{stmea}. Moreover, $L's$ measurement is certified to be GHZ-like, that is,  $M_l=U_L\proj{\phi_l}U_L^{\dagger}$ for all $l$ according to \eqref{stmea}. In the appendix, we present the self-testing statement in full mathematical generality with the proof deferred to \cite{sarkar2024universal}. Let us now proceed to the second step of the scheme.

{\it{Step 2.}} Consider a reference unitary $\mathcal{U}:(\mathbb{C}^{2})^{\otimes N}\rightarrow (\mathbb{C}^{2})^{\otimes N}$ such that $\mathcal{U}\ket{\phi_l}=\ket{\delta_l}$ for all $l=0,1,\ldots,2^N-1$.
Here $\ket{\phi_l}$ are the GHZ-like vectors stated in Eq. \eqref{GHZvecsm} and $\{\ket{\delta_l}\}$ are an arbitrary orthonormal set of vectors that span the Hilbert space $(\mathbb{C}^{2})^{\otimes N}$. Consequently, any unitary that maps vectors from $(\mathbb{C}^{2})^{\otimes N}$ to $(\mathbb{C}^{2})^{\otimes N}$ can be represented as
\begin{equation}\label{genU}
\mathcal{U}=\sum_{l=1}^{2^N}\ket{\phi_l}\!\bra{\delta_l}.
\end{equation}

\begin{figure*}[t!]
    \centering
    \includegraphics[width=\linewidth]{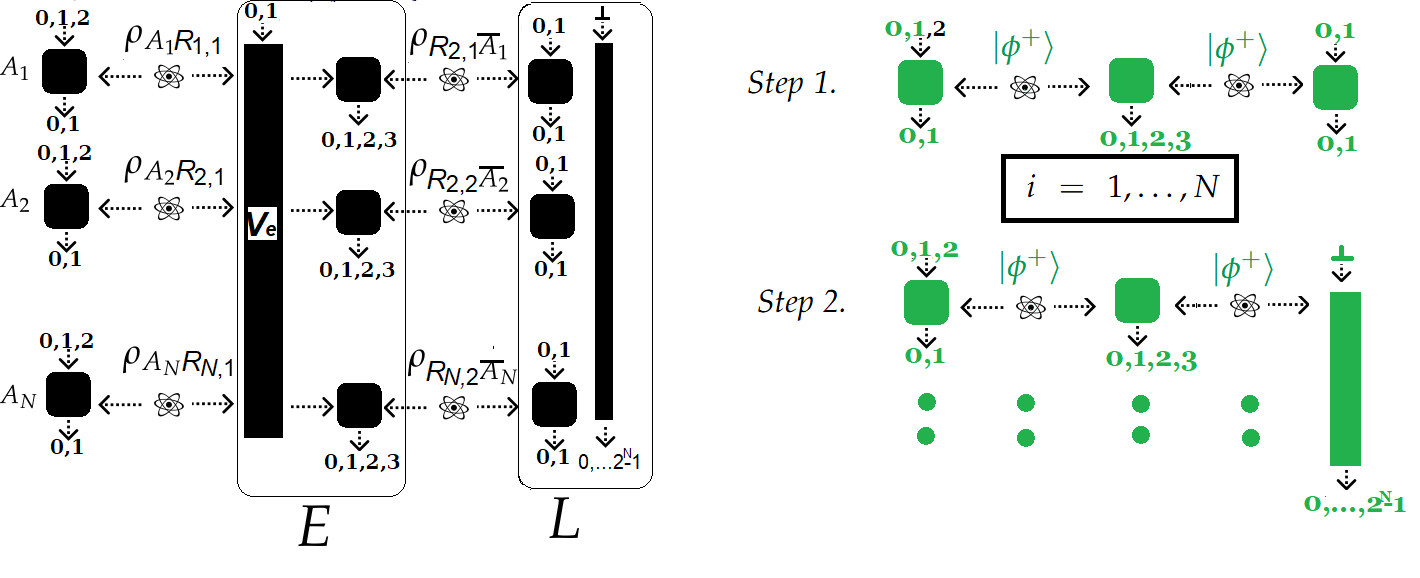}
    \caption{\textbf{DI certification of any quantum gate.} (Left.) Scenario. The setup involves \( N+2 \) parties: \( A_i \) for \( i = 1, \ldots, N \), a central party \( E \) and $L$. There are \( 2 N \) independent sources, with $N$ of them distributing a bipartite quantum state \( \rho_{A_iR_{i,1}} \) between party \( A_i \) and \( E \) and \(\rho_{R_{i,2}\overline{A}_i} \) between party \( E \) and $L$. Each \( A_i \) has three possible inputs and two outcomes. The central party \( E \) has two inputs that correspond to two different transformations $V_e$ such that $V_0=\I, V_1=V$. The transformed states are then fed to a quantum repeater that performs a single joint measurement on the $i-th$ subsystem along with its counterpart received from the other source. $L$ performs $2^N+1$ measurements which comprises of $N$ binary measurements and a single \( 2^N \)-outcome measurement. (Right.) The first two steps when $(e=0)$ of the DI certification are visualised. In {\it{Step 1.}}, each subnetwork is certified, which involves certifying the states, quantum repeater and two binary measurements of $A_i,L$. In {\it{Step 2.}}, using the characterised parts of the network from {\it{Step 1.}}, the rest of the components are certified. In the final Step ({\it{Step 3.}} see below) $E$ chooses $e=1$, which, using the characterised state and measurements from {\it{Step 2.}}, allows us to self-test the unitary $V$.  
}
    \label{fig2}
\end{figure*}

Taking $N$ to be arbitrarily large, we obtain any possible unitary transformation that maps vectors from any finite-dimensional Hilbert space to itself (up to isomorphism). Moreover, any projector $\proj{\delta_l}$ acting on $(\mathbb{C}^{2})^{\otimes N}$ can be expressed as
\begin{equation}\label{proj1m}
\proj{\delta_{l}}=\sum_{i_1,\ldots,i_N=0,1,2,3}f_{l,i_1,\ldots,i_N}\bigotimes_{k=1}^N\tilde{S}_{k,i_k}
\end{equation}
where $\tilde{S}_{k,0}=Z$, $\quad\tilde{S}_{k,1}=X$, $\tilde{S}_{k,2}=Y$ and $\tilde{S}_{k,3}=\I_2$ for any $k$ and every $f_{l,i_1,\ldots,i_N}$ is real for any $i_1,\ldots,i_N$.


Now, we consider the correlations $\{p(a_1\ldots a_Nl|x_1\ldots x_N 1)\}$ corresponding to $E's$ choice $e=1$. Using it, all the parties observe the following condition:
\begin{equation}\label{betstatfullm}
\sum_{i_1,\ldots,i_N=0}^3f_{l,i_1,\ldots,i_N}\left\langle\tilde{\mathcal{A}}_{1,i_1}\bigotimes_{k=2}^N \mathcal{A}_{k,i_k}\otimes V^{\dagger}M_lV\right\rangle_{\ket{\psi_{A\overline{A}}}} =\frac{1}{2^N}
\end{equation}
where $\tilde{A}_{1,0}, \tilde{A}_{1,1}$ are given in \eqref{overAm} with $\tilde{A}_{1,2}=A_{1,2}, \tilde{A}_{1,3}=\I$ and $A_{k,3}=\I$ for any $k$. 
Notice that the above statistics can be realised if the sources generate the two-qubit maximally entangled state and with $M_l=\{\proj{\phi_l}\}$ and $V=\mathcal{U}^*$ with $\mathcal{A}_{i,j}=\mathcal{A}'_{i,j}$ (Eq. \eqref{GHZObsm}). Let us now state the main result of this work, recalling that $V:\mathcal{H}_L\rightarrow\mathcal{H}_L$ where $\mathcal{H}_L=\bigotimes_{i=1}^N\mathcal{H}_{\overline{A}_i}$.

\begin{thm}\label{thm1}
    Suppose that the states generated by the sources and the measurements of $A_i's$ and $L$ are certified as suggested in {\it{Step 1}} above when $e=0$. Consider then that when $e=1$ and all $A_i's$ and $L$ observe the condition \eqref{betstatfullm}. Consequently, the transformation $V$ is certified to be $\mathcal{U}$ \eqref{genU} according to the definition \eqref{stU}.
\end{thm}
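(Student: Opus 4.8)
The plan is to use the Step~1 certification to pass to a concrete picture of the experiment, and then to read off the structure of $V$ from the saturation condition~\eqref{betstatfullm}. First I would apply the local isometries $U_{A_i}\otimes U_{\overline A_i}$ provided by Step~1. Conjugating by them brings the shared state to $\bigotimes_{i=1}^N\ket{\phi_0}_{A_i'\overline A_i'}\otimes\ket{\xi}_{A_i''\overline A_i''}$, turns the external observables into the reference ones — so that $\tilde{\mathcal A}_{1,i_1}\mapsto \tilde S_{1,i_1}\otimes\I_{A_1''}$ and $\mathcal A_{k,i_k}\mapsto \tilde S_{k,i_k}\otimes\I_{A_k''}$ for $k\ge2$, with the $\tilde S$'s as in~\eqref{proj1m} — turns $L$'s POVM into $\proj{\phi_l}_{L'}\otimes\I_{L''}$, and replaces $V$ by the unitary $\tilde V:=U_L V U_L^{\dagger}$ on $\mathcal H_{L'}\otimes\mathcal H_{L''}$. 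The $A_1'\ldots A_N'$--$\overline A_1'\ldots\overline A_N'$ part of the state is then the canonical maximally entangled vector $\ket\Phi$ on $(\mathbb C^2)^{\otimes N}\otimes(\mathbb C^2)^{\otimes N}$, while $\ket\xi=\bigotimes_i\ket{\xi}_{A_i''\overline A_i''}$ carries the junk; write $\sigma$ for its reduced state on $L''$, which is full rank since the local states (hence the external measurements) are full rank.

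Next I would substitute this into~\eqref{betstatfullm}. Summing the $\tilde S$'s against the coefficients $f_{l,\vec i}$ collapses the $A'$-side operator to $\proj{\delta_l}_{A'}$ by~\eqref{proj1m}, and the standard maximally-entangled-state (transpose) identity rewrites the left-hand side as $\tfrac1{2^N}\,\mathrm{Tr}\!\big[(\proj{\delta_l^{*}}_{L'}\otimes\sigma_{L''})\,\tilde V^{\dagger}(\proj{\phi_l}_{L'}\otimes\I_{L''})\,\tilde V\big]$, where $\ket{\delta_l^{*}}$ is the entrywise complex conjugate of $\ket{\delta_l}$. Thus~\eqref{betstatfullm} is equivalent to $\mathrm{Tr}\!\big[\tilde V(\proj{\delta_l^{*}}_{L'}\otimes\sigma_{L''})\tilde V^{\dagger}\,(\proj{\phi_l}_{L'}\otimes\I_{L''})\big]=1$ for every $l$. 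Since $\tilde V(\proj{\delta_l^{*}}\otimes\sigma)\tilde V^{\dagger}$ is a density operator and $\proj{\phi_l}_{L'}\otimes\I_{L''}$ a projector, this forces $\mathrm{supp}\big(\tilde V(\proj{\delta_l^{*}}\otimes\sigma)\tilde V^{\dagger}\big)\subseteq \mathbb C\ket{\phi_l}\otimes\mathcal H_{L''}$; because $\sigma$ is full rank, $\tilde V$ is unitary, and the two subspaces have equal dimension, I conclude that $\tilde V$ maps $\mathbb C\ket{\delta_l^{*}}\otimes\mathcal H_{L''}$ onto $\mathbb C\ket{\phi_l}\otimes\mathcal H_{L''}$, for every $l$.

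Finally I would assemble the blocks. Both $\{\ket{\delta_l^{*}}\}_l$ and $\{\ket{\phi_l}\}_l$ are orthonormal bases of $(\mathbb C^2)^{\otimes N}$ (the GHZ-like vectors~\eqref{GHZvecsm} are mutually orthogonal), and the $\ket{\phi_l}$ have real coefficients, so $\ket{\phi_l^{*}}=\ket{\phi_l}$ and $\sum_l\ket{\phi_l}\!\bra{\delta_l^{*}}=\mathcal U^{*}$ by~\eqref{genU}. The block statement above says exactly that $\tilde V$ respects the decomposition $\bigoplus_l\big(\mathbb C\ket{\delta_l^{*}}\otimes\mathcal H_{L''}\big)$, i.e.\ $\tilde V=\sum_l\ket{\phi_l}\!\bra{\delta_l^{*}}_{L'}\otimes W_l=(\mathcal U^{*}\otimes\I_{L''})\big(\sum_l\proj{\delta_l^{*}}_{L'}\otimes W_l\big)$ for some unitaries $W_l$ on $\mathcal H_{L''}$. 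Invoking the almost-DI hypothesis that $V$ leaves the local supports intact — so that $V$ is inert on the auxiliary sectors isolated in Step~1 — the residual controlled factor $\sum_l\proj{\delta_l^{*}}_{L'}\otimes W_l$ is absorbed into the auxiliary part of the isometry (equivalently, gauged away together with a redefinition of $\ket\xi$), which yields $U V U^{\dagger}=\mathcal U^{*}\otimes\I_{\mathrm{aux}}$ with $U=U_L$ — precisely~\eqref{stU} up to the global complex conjugation that the definition allows.

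I expect this last step to be the main obstacle. The observed statistics alone only pin $\tilde V$ down to the block form with a priori distinct auxiliary unitaries $W_l$, and these are genuinely invisible to the data: one checks $\tilde V^{\dagger}(\proj{\phi_l}_{L'}\otimes\I_{L''})\tilde V=\proj{\delta_l^{*}}_{L'}\otimes\I_{L''}$ regardless of the $W_l$, so every correlation in the $e=1$ round is reproduced. Converting the conclusion ``$\tilde V$ acts as $\mathcal U^{*}$ on the logical factor'' into the clean tensor form~\eqref{stU} is exactly where the local-support-invariance assumption is used; removing it is what necessitates the enlarged network of Fig.~\ref{fig2} in the fully device-independent scheme.
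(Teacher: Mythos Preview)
Your derivation up to the block form $\tilde V=\sum_l\ket{\phi_l}\!\bra{\delta_l^{*}}\otimes W_l$ is correct and runs parallel to the paper. The paper also passes to the Step-1 frame, but instead of your direct computation with the transpose identity and the full-rank junk state $\sigma$, it observes that \eqref{betstatfullm} is precisely the self-testing condition for the ``effective measurement'' $R_l:=V^{\dagger}M_lV$, invokes a result from \cite{sarkar2024universal} to obtain $U_L R_l U_L^{\dagger}=\proj{\delta_l^{*}}\otimes\I_{L''}$, and then substitutes the certified $M_l$ (this is where the support-invariance assumption enters). Both routes arrive at the projector intertwining relation $\tilde V^{\dagger}(\proj{\phi_l}\otimes\I_{L''})\tilde V=\proj{\delta_l^{*}}\otimes\I_{L''}$, equivalently at your block form.

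The two arguments diverge at the final step, and there your diagnosis is sharper than your cure. The paper does \emph{not} try to absorb the $W_l$'s: it tensors with a maximally entangled vector $\ket{\phi_D^+}_{L''\mathcal L''}$ on a fresh ancilla, asserts that the projector relation lifts to the vector identity $(\tilde V\otimes\I)(\ket{\phi_l}\otimes\ket{\phi_D^+})=\ket{\delta_l}\otimes\ket{\phi_D^+}$, expands $\tilde V=\sum_{ij}\ket{\phi_i}\!\bra{\phi_j}\otimes F_{ij}$, and reads off $F_{il}=\braket{\phi_i}{\delta_l}\,\I$, i.e.\ concludes $W_l=\I$ outright. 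Your own observation --- that any block unitary with arbitrary $W_l$ still satisfies $\tilde V^{\dagger}(\proj{\phi_l}\otimes\I)\tilde V=\proj{\delta_l^{*}}\otimes\I$ and hence reproduces all the $e=1$ statistics --- shows that this lift to a vector identity does not follow from the projector relation alone. Your proposed fix fares no better: the almost-DI hypothesis $V:\mathcal H_L\to\mathcal H_L$ says nothing about how $V$ acts on the auxiliary factor $\mathcal H_{L''}$, and an $l$-dependent controlled unitary $\sum_l\proj{\delta_l^{*}}\otimes W_l$ with distinct $W_l$ cannot be conjugated to $\I$ by any single unitary on $\mathcal H_L$, so it cannot be ``gauged away'' into \eqref{stU}. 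The statistics determine $\tilde V$ only modulo those auxiliary blocks, and neither argument as written eliminates them.
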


The proof of the above theorem is provide in Appendix. Let us now proceed towards DI characterisation of $V$.

{\it{DI certification of any quantum gate---}} Consider now that the unitary $V$ can change the support of the local states, that is, $V:\mathcal{H}_L\rightarrow \mathcal{H}'_L$ where $\mathcal{H}'_L$ is isomorphic to $\mathcal{H}_L$. In the almost DI scheme [Fig. \ref{fig1}], we utilise the fact that $L's$ measurement is certified on $\mathcal{H}_L$ when $e=0$ and thus when $e=1$, $V$ should not alter $\mathcal{H}_L$ and can be certified using it. Now, if $V$ changes $\mathcal{H}_L$ to $\mathcal{H}'_L$, then we can not use the measurement at $L$, as it is not certified on the space $\mathcal{H}'_L$. 

To avoid this problem, we utilise the idea from \cite{Allst, sarkar2024steer}, which ensures that the support of the local states at $L$ remains invariant. The basic premise of the idea is to teleport the states from $E$ to $L$ rather than directly sending them. Consequently, we modify the setup from Fig. \ref{fig1} as shown in Fig. \ref{fig2} by introducing more components at the central party $E$. Here, each of the $N$ subsystems from the output of $V_e$ are fed to a quantum repeater (a four-outcome Bell-measurement) which performs a joint measurement on them with additional $N$ subsystems generated by bipartite sources $P_i's$ [see Fig. \ref{fig2}]. The outcomes of all these measurements are denoted by $r_i=0,1,2,3$. 
Moreover, the additional $N$ subsystems generated by $P_i's$ go to $L$. Moreover, at $L$ additional to the single joint measurement on all subsystems, $N$ distinct measurement boxes are taken that perform two measurements each on the $N$ different subsystems. The inputs of these boxes are denoted as $y\equiv y_1\ldots y_N$ with outputs denoted by $l\equiv l_1\ldots l_N$ where $y_i,l_i=0,1$ are the input and output of the $i-th$ box. Consequently, $L$ now performs $2^N+1$ measurements with the input of the single joint measurement on all $N$ subsystem denoted as $y=\perp$ and outcomes denoted as $l\equiv l_1\ldots l_N$.
The rest of the setup is the same as Fig. \ref{fig1}. 

All the parties repeat the experiment to construct the correlations $\vec{p}=\{p(a_1\ldots a_Nr_1\ldots r_Nl|x_1\ldots x_N e y)\}$. Here, we denote the states generated by $P_i$ and $P_i'$ as $\rho_{A_iR_i}$ and $\rho_{R_i\overline{A}_i}$ and thus all the states together are referred to as $\rho_{AR}=\bigotimes_{i=1}^N\rho_{A_iR_i}$ and $\rho_{R\overline{A}}=\bigotimes_{i=1}^N\rho_{R_i\overline{A}_i}$. As done before, due to no restriction on the dimension of the Hilbert spaces, we take these states to be pure. 

The self-testing scheme is divided into three parts: In the first part, Eve chooses $e=0$, which is used to certify the states generated by the sources $P_i,P_i'$ and the measurements at $E$. The second and the third step are identical to the almost-DI scheme. Let us begin by describing the first step of the DI scheme to certify any unitary transformation.


{\it{Step 1.}} Consider the correlations $p(a_ir_il_i|x_i,e=0,y_i)$ for any $i=1,\ldots,N$. Using these correlations, one can construct Bell inequalities for each $r_i$ as 
$\langle \mathcal{K}_{r_i,i} \rangle\leq \beta_C,\ (\beta_C=\sqrt{2})$ where 
\begin{eqnarray}\label{K1}
    \mathcal{K}_{r_1,1}= (-1)^{r_{i,1}}\tilde{\mathcal{A}}_{1,1} \mathcal{B}_{1,1} +(-1)^{r_{i,2}}\tilde{\mathcal{A}}_{1,0}\mathcal{B}_{1,0} 
\end{eqnarray}
where $ \tilde{\mathcal{A}}_{1,0}=\frac{(\mathcal{A}_{1,0}-\mathcal{A}_{1,1})}{\sqrt{2}},\ \tilde{\mathcal{A}}_{1,1}=\frac{(\mathcal{A}_{1,0}+\mathcal{A}_{1,1})}{\sqrt{2}}$ and for $i=2,\ldots,N$
\begin{eqnarray}\label{K2}
    \mathcal{K}_{r_i,i}= (-1)^{r_{i,1}}\mathcal{A}_{i,1} \tilde{\mathcal{B}}_{i,1} +(-1)^{r_{i,2}}\mathcal{A}_{i,0}\tilde{\mathcal{B}}_{i,0} 
\end{eqnarray}
where $ \tilde{\mathcal{B}}_{i,0}=\frac{(\mathcal{B}_{i,0}-\mathcal{B}_{i,1})}{\sqrt{2}},\ \tilde{\mathcal{B}}_{1,1}=\frac{(\mathcal{B}_{i,0}+\mathcal{B}_{i,1})}{\sqrt{2}}$. Here, $\mathcal{B}_{i,j}$ are the observables of the measurements at $L$ defined similarly as $\mathcal{A}_{i,j}$. Notice that they are identical to the Bell functionals $\mathcal{I}_l$ \eqref{BE1Nm} for $N=2$ (removing the third measurement). The quantum bound of the $\langle \mathcal{K}_{r_i,i} \rangle$ is $2$. Consequently, when $\langle \mathcal{K}_{r_i,i}\rangle=2$ for all $r_i,i$, then the observables $\mathcal{A}_{i,j}$ for $j=1,2$ are certified to be the reference ones \eqref{GHZObsm} according to Eq. \eqref{stmea} along with the observables $\mathcal{B}_{i,j}$. Importantly, the state generated by the sources $P_i,P_i'$ for every $i$ are certified to be maximally entangled according to Eq. \eqref{ststate} and the measurement (quantum repeater) at $E$ is certified to be the Bell-basis measurement according to Eq. \eqref{stmea}.

{\it{Step 2.}} Now, we again repeat {\it{Step 1.}} of the almost DI scheme by considering $\{p(a_1\ldots a_N0\ldots 0l|x_1\ldots x_N e=0 y=\perp)\}$. When $L$ observes the outcome $l$, then $A_i's$ observe Bell inequalities $\mathcal{I}_l$ Eq. \eqref{BE1Nm} are maximally violated, that is, we impose $\langle\mathcal{I}_l\otimes R_0\otimes M_l\rangle_{\rho_{AR}\otimes\rho_{R\overline{A}}}=p(l)\overline{p}(0)\beta_Q$ where $R_0\equiv \bigotimes_{i=1}^N R_{i,0}$ and $\overline{p}(0)=\prod_{i=1}^Np(r_i=0)$. Consequently, using the certified states and measurements from {\it{Step 1.}}, we arrive at the condition $\langle\mathcal{I}_l\otimes M_l\rangle_{\ket{\tau_{A\overline{A}}}}=p(l)\beta_Q$ which is the same as in {\it{Step 1.}} of the almost DI scheme with the state $\ket{\tau_{A\overline{A}}}$ already certified to be maximally entangled according to Eq. \eqref{ststate}. This allows us to certify the post-measured state of $L$ and $E$ and the left-over observable $\mathcal{A}_{i,2}$. Then, using the fact that each outcome of $L$ is equi-probable allows us to certify $L's$ measurement corresponding to input $l=\perp$ be GHZ-like according to Eq. \eqref{stmea}. 

{\it{Step 3.}} Then, {\it{Step 2.}} of the almost DI scheme is applied by considering $\{p(a_1\ldots a_N0\ldots 0l|x_1\ldots x_N e=1)\}$. Here, we thus consider the correlations among all the parties similar to Eq. \eqref{betstatfullm} given by
\begin{eqnarray}\label{betstatfullm12121}
\sum_{i_1,\ldots,i_N=0}^3f_{l,i_1,\ldots,i_N}\left\langle\tilde{\mathcal{A}}_{1,i_1}\bigotimes_{k=2}^N \mathcal{A}_{k,i_k}\otimes V^{\dagger}R_0V\otimes M_l\right\rangle_{\rho_{AR}\otimes\rho_{R\overline{A}}} \nonumber\\=\frac{\overline{p}(0)}{2^N}.\qquad
\end{eqnarray}
The above condition along with the certified state and measurements from {\it{Step 1.}} and {\it{Step 2.}} allows us to estabilish the final result of this work.

\begin{thm}\label{theorem4m}
Suppose that the states generated by the sources and the measurements of $A_i's,E$ and $L$ are certified as suggested in {\it{Step 1}} and {\it{Step 2}} above when $e=0$. Consider then that when $e=1$ and all $A_i's$ and $L$ observe the condition \eqref{betstatfullm12121}. Consequently, the transformation $V$ projected onto the support of the local state at $E$ is certified to be $\mathcal{U}$ \eqref{genU} according to Eq. \eqref{stU}.
\end{thm}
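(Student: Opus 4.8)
\textbf{Proof proposal for Theorem~\ref{theorem4m}.}
The plan is to reduce the full device-independent scenario of Fig.~\ref{fig2} to the almost-DI scenario of Fig.~\ref{fig1}, and then invoke Theorem~\ref{thm1}. The reduction hinges on showing that the quantum-repeater (Bell-basis) measurements at $E$, together with the certified maximally entangled states $\rho_{A_iR_i}$ and $\rho_{R_i\overline{A}_i}$ from \textit{Step 1}, implement an \emph{entanglement-swapping} (teleportation) channel that faithfully transports whatever state $V$ produces on $E$'s side over to $L$'s Hilbert space $\mathcal{H}_L$, on which $L$'s measurements are already certified by \textit{Step 2}. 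First I would fix the outcome string $r_1\ldots r_N = 0\ldots 0$ of the repeaters (as in the conditioning in Eq.~\eqref{betstatfullm12121}) and use the \textit{Step 1} certification to write, up to the local isometries $U_{R_i}$, the post-repeater state of the $A_iL$ systems conditioned on $r_i=0$ as the maximally entangled reference state $\ket{\tau_{A\overline{A}}}$, with the action of $V$ on the $R_{i,2}$-legs translated (via the teleportation identity, since Bell measurement outcome $0$ corresponds to the identity correction) into an action $V^\mathsf{T}$ (or $V$, depending on conventions) on the $\overline{A}_i$-legs that $L$ holds. This is precisely the step that replaces ``$E$ sends the state directly to $L$'' by ``$E$ teleports the state to $L$,'' and it is the reason the support of $L$'s local state is kept invariant.

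Once this identification is made, Eq.~\eqref{betstatfullm12121}, after dividing through by $\overline{p}(0)=\prod_i p(r_i=0)$, becomes literally Eq.~\eqref{betstatfullm} with the certified state $\ket{\tau_{A\overline{A}}}$ in place of $\ket{\psi_{A\overline{A}}}$, the certified GHZ-like measurement $M_l$ at $L$ (from \textit{Step 2}), and the certified observables $\tilde{\mathcal{A}}_{1,i_1},\mathcal{A}_{k,i_k}$ at the $A_i$'s. At that point the hypotheses of Theorem~\ref{thm1} are met verbatim, and its conclusion gives a unitary $U$ on $L$'s (teleported) space with $U V U^\dagger = \mathcal{U}\otimes\I_{\mathrm{aux}}$, i.e. Eq.~\eqref{stU}, where ``$V$'' here means $V$ restricted to the support of $E$'s local state — exactly the statement of Theorem~\ref{theorem4m}. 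I would spell out the decomposition \eqref{proj1m} of $\proj{\delta_l}$ into tensor products of Paulis to check that each expectation value appearing in \eqref{betstatfullm12121} is indeed a product of locally-measurable correlators at the $A_i$'s and at $L$ (with the repeater outcome fixed), so that the left-hand side is an observable quantity and the constraint is well posed; this is routine given \eqref{overAm} and the form of $\tilde{S}_{k,i_k}$.

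The main obstacle, I expect, is the teleportation-transfer step: one must argue rigorously that conditioning on the repeater outcome $0\ldots0$ and using the \textit{Step 1} certification does not merely certify the \emph{states} and the \emph{repeater measurement} in isolation, but correctly propagates the as-yet-uncharacterised operator $V$ through the swap. Concretely, $V$ acts before the Bell measurement on systems $R_{i,2}$, which are \emph{entangled with $\overline{A}_i$}; one needs the operator identity $(\I\otimes\bra{\Phi^+})(V_{R_{i,2}}\otimes\I_{\overline{A}_i})\ket{\Phi^+}_{R_{i,1}R_{i,2}}\otimes(\cdots) \propto V^{\mathsf{T}}$ acting on the $R_{i,1}$ leg, and must verify that the local isometry supplied by \textit{Step 1} is exactly the one that turns this into the clean form assumed in Theorem~\ref{thm1}. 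A subtlety here is the junk/auxiliary subsystem $\mathcal{H}''$: $V$ could in principle act nontrivially there, which is why the statement only certifies ``$V$ projected onto the support of the local state at $E$'' — I would make this restriction explicit by composing with the projector onto that support before quoting \eqref{stU}. Handling the complex-conjugation ambiguity consistently across all $N$ subsystems (the ``all-or-nothing'' feature noted after \eqref{stU}) is the last bookkeeping point to close.
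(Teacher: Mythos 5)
Your strategy matches the paper's own proof at every essential point: use the \emph{Step 1} certification of the two layers of sources and of the Bell-basis repeater to collapse the network of Fig.~\ref{fig2} onto the almost-DI scenario, restrict $V$ to the characterised support (your ``compose with the projector onto that support'' remark is exactly the paper's replacement of $V$ by $\overline{V}_{R_1}=\Pi_{R_1'R_1''}V\,\Pi_{R_1'R_1''}$, forced by the uncharacterised block $J_{i,r_i}$ of the repeater POVM), and then invoke Theorem~\ref{thm1}. The one tactical difference is directional: you push $V$ forward through the entanglement swap onto $L$'s space, which is why you flag a possible stray transpose; the paper instead pulls $L$'s certified measurement back onto $E$'s system by computing $\mathrm{Tr}_{R_2\overline{A}}\bigl(U_{R_1}\mathcal{R}_0U_{R_1}^{\dagger}\otimes M_l\,\proj{\psi_{R_2\overline{A}}}\bigr)=\overline{p}(0)\proj{\phi_l}_{R_1'}\otimes\I_{R_1''}$, after which \eqref{betstatfullm12121} becomes verbatim the almost-DI condition \eqref{betstatfullm} with $\overline{V}$ left acting where it lives, and no transpose ever appears (the GHZ vectors \eqref{GHZvecsm} being real makes this pull-back exact). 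Your direction also closes, because the transpose is picked up twice --- once from $(\I\otimes V)\ket{\phi^+}=(V^{\mathsf{T}}\otimes\I)\ket{\phi^+}$ and once from projecting onto the Bell outcome --- and therefore cancels, but you would need to carry that computation out explicitly rather than leave it as ``$V^{\mathsf{T}}$ or $V$ depending on conventions.'' Two small corrections: in the paper's labelling $V$ acts on the $R_{i,1}$ legs (those entangled with the $A_i$), not on $R_{i,2}$; and the state certification among the $A_i$ and $L$ after the swap is not needed as an intermediate object --- the reduction goes directly through the expectation-value identity above.
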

In the Appendix, we give a formal statement along with the proof of the above stated theorem.
Thus, we have now provided a DI way of certifying any quantum unitary.

{\it{Outlook---}} Following our work, we have identified several open problems that might interesting from theoretical as well as practical perspective. Although we have provided a proof-of-principle scheme for self-testing arbitrary unitary transformations, practical implementations will inevitably encounter deviations from the ideal statistics. A key open question is whether our scheme remains robust under such experimental imperfections. Furthermore, the current construction relies on highly entangled states and complex measurements, which may be resource-intensive to realise. It would therefore be valuable to examine whether similar self-testing guarantees can be achieved using partially entangled states or simpler measurement settings. Lastly, it would be interesting if our scenario can be utilised for self-testing general quantum processes that are completely positive and trace-prseserving.

\begin{acknowledgments}
 This project was funded within the QuantERA II Programme (VERIqTAS project) that has received funding from the European Union’s Horizon 2020 research and innovation programme under Grant Agreement No 101017733 and from the Polish National Science Center (project No 2021/03/Y/ST2/00175). We also acknowledge National Science Centre, Poland, grant Opus 25, 2023/49/B/ST2/02468.
\end{acknowledgments}

\providecommand{\noopsort}[1]{}\providecommand{\singleletter}[1]{#1}%

\onecolumngrid
\appendix

\begin{center}
   \textbf{Supplementary Material: Any gate of a quantum computer can be certified device-independently}
\end{center}

\section{Almost DI certification}


Let us begin by stating the self-testing of sources, $A_i's$ and $L's$ measurement from \cite{sarkar2024universal} when $E$ chooses $e=0$, that is, does not interact with the incoming systems.
\setcounter{thm}{0}
\begin{fakt}\label{theorem1}
Consider the quantum network in Fig. \ref{fig1} with $e=0$ and assume that the Bell functional $\langle\mathcal{I}_l\rangle$ \eqref{BE1Nm} for any $l$ attain the quantum bound and each outcome of $L$ occurs with probability $p(l)=1/2^N$ where $N$ denotes the number of external parties. The sources $P_i$ prepare the states $\ket{\psi_{A_i\overline{A}_i}}\in \mathcal{H}_{A_i}\otimes\mathcal{H}_{\overline{A}_i}$ for $i=1,2,\ldots,N$, the measurement of $L$ is given by $\{M_l\}$ for $l=l_1l_2\ldots l_N$ such that $l_i=0,1$ which acts on $\bigotimes_{i=1}^N\mathcal{H}_{\overline{A}_i}$, the  local observables for each of the party is given by $\mathcal{A}_{i,0}$, $\mathcal{A}_{i,1}$, and $\mathcal{A}_{i,2}$ for $i=1,2,\ldots,N$ which act on $\mathcal{H}_{A_i}$. Then,
\begin{enumerate}
    \item The Hilbert spaces of all the parties decompose as $\mathcal{H}_{A_i}=\mathcal{H}_{A_i'}\otimes\mathcal{H}_{A_i''}$ and $\mathcal{H}_{\overline{A}_i}=\mathcal{H}_{\overline{A}_i'}\otimes\mathcal{H}_{\overline{A}_i''}$, where $\mathcal{H}_{A_i'}$ and 
    $\mathcal{H}_{\overline{A}_i'}$ are qubit Hilbert spaces, whereas $\mathcal{H}_{A_i''}$ and $\mathcal{H}_{\overline{A}_i''}$
    are some finite-dimensional but unknown auxiliary Hilbert spaces.
    \item There exist local unitary transformations $U_{A_i}:\mathcal{H}_{A_i}\rightarrow(\mathbb{C}^2)_{A'}\otimes\mathcal{H}_{A''_i}$ and $U_{\overline{A}_i}:\mathcal{H}_{\overline{A}_i}\rightarrow(\mathbb{C}^2)_{A'}\otimes \mathcal{H}_{L''_i}$ 
    such that the states are given by
\begin{eqnarray}\label{A10}
U_{A_i}\otimes U_{\overline{A}_i}\ket{\psi_{A_i\overline{A}_i}}=\ket{\phi^+_{A_i'\overline{A}'_i}}\otimes\ket{\xi_{A_i''\overline{A}''_i}}.
\end{eqnarray}
\item The measurement of $L$ is
\begin{eqnarray}\label{statest1}
 U_LM_lU_L^{\dagger} =\proj{\phi_l}_{L'}\otimes\I_{L''}\qquad \forall l,
\end{eqnarray}
where $U_L=U_{\overline{A}_1}\otimes\ldots\otimes U_{\overline{A}_N}$ and $\ket{\phi_l}$ are given in Eq. \eqref{GHZvecsm}. The measurements of all the other parties are given by
\begin{eqnarray}\label{mea1}
U_{A_1}\mathcal{A}_{1,0}\,U_{A_1}^{\dagger}&=&\left(\frac{X+Z}{\sqrt{2}}\right)_{\mathcal{A}_1'}\otimes\I_{A_1''}, \quad U_{A_1}A_{1,1}\,U_{\mathcal{A}_1}^{\dagger}=\left(\frac{X-Z}{\sqrt{2}}\right)_{A_1'}\otimes\I_{A_1''},\nonumber\\
U_{A_i}\mathcal{A}_{i,0}\,U_{A_i}^{\dagger}&=&Z_{A_i'}\otimes\I_{A_i''},\quad\quad\ \ \ \ \  U_{A_i}\mathcal{A}_{i,1}\,U_{A_i}^{\dagger}=X_{A_i'}\otimes\I_{A_i''}\qquad (i=2,\ldots,N)\label{stmea1}
\end{eqnarray}
and,
\begin{eqnarray}\label{mea2}
    U_{A_i}\mathcal{A}_{i,2}\,U_{A_i}^{\dagger}&=&\pm Y_{A_i'}\otimes\I_{A_i''} \qquad\forall i.\label{stmea2}
\end{eqnarray}
\end{enumerate}
\end{fakt}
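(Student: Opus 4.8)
The plan is to run the standard self-testing pipeline on the family of Mermin/GHZ-type Bell functionals $\mathcal{I}_l$ in \eqref{BE1Nm}: produce a sum-of-squares (SOS) certificate for each $\mathcal{I}_l$, apply Jordan's lemma to split off a qubit at every external party, and then use the uniformity $p(l)=2^{-N}$ together with the product-over-sources structure of the network to pin down the shared states and $L$'s joint measurement. Since $e=0$ the central box acts trivially, so the relevant state is $\ket{\psi_{A\overline A}}=\bigotimes_i\ket{\psi_{A_i\overline A_i}}$, and conditioning on $L$ reporting $l$ (probability $p(l)$), the normalised post-measured reduced state $\rho^{(l)}_A$ on $A_1\dots A_N$ saturates $\mathrm{Tr}(\mathcal{I}_l\,\rho^{(l)}_A)=\beta_Q=3(N-1)$.

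First I would write $\beta_Q\I-\mathcal{I}_l=\sum_k (P^{(l)}_k)^{\dagger}P^{(l)}_k$ with the $P^{(l)}_k$ linear in products of the observables $\mathcal{A}_{i,x_i}$ (the familiar GHZ/Mermin SOS, the $1/\sqrt2$'s absorbing party~$1$'s rotated settings $\tilde{\mathcal A}_{1,0},\tilde{\mathcal A}_{1,1}$ of \eqref{overAm}). Saturation forces $P^{(l)}_k\rho^{(l)}_A=0$; combining these relations in the usual way, summing over $l$, and using that $\rho_A=\sum_l p(l)\rho^{(l)}_A$ is full rank (the standard ``measurements live on the support'' reduction), one upgrades them to operator identities — the anticommutations $\{\mathcal A_{i,0},\mathcal A_{i,1}\}=0$ for $i\ge2$, $\{\mathcal A_{i,2},\mathcal A_{i,0}\}=\{\mathcal A_{i,2},\mathcal A_{i,1}\}=0$ for all $i$, and the GHZ-type linking relations tying $\tilde{\mathcal A}_{1,0},\tilde{\mathcal A}_{1,1}$ to the products $\bigotimes_{i\ge2}\mathcal A_{i,x}$. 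Jordan's lemma applied to each party's involutions then gives $\mathcal H_{A_i}\cong(\mathbb C^2)_{A_i'}\otimes\mathcal H_{A_i''}$ and local unitaries $U_{A_i}$ — the \emph{same} for all $l$, since the observables are $l$-independent — bringing $\mathcal A_{i,0},\mathcal A_{i,1}$ to the reference Pauli forms \eqref{stmea1}, with the remaining anticommutators forcing $U_{A_i}\mathcal A_{i,2}U_{A_i}^{\dagger}=\pm Y\otimes\I$ ($=\pm iXZ$ on the qubit), i.e. \eqref{stmea2}; the sign is common to all parties, being the residual global complex-conjugation freedom ($X^{\ast}=X$, $Z^{\ast}=Z$, $Y^{\ast}=-Y$, $\ket{\phi_l}$ real).

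Next, in this common frame I would reinsert the Pauli form of the $A_i$-observables into $\mathcal I_l$: saturation of $\beta_Q$ then forces the qubit marginal of $\rho^{(l)}_A$ to be $\proj{\phi_l}_{A'}$ and to factorise off $\mathcal H_{A''}$, i.e. $(\I_A\otimes M_l)\ket{\Psi}=\sqrt{p(l)}\,\ket{\phi_l}_{A'}\otimes\ket{\eta_l}_{A''\overline A}$ with $\ket\Psi=(\bigotimes_iU_{A_i}\otimes\I)\ket{\psi_{A\overline A}}$. Using $\sum_l M_l=\I$ and orthonormality of the $\ket{\phi_l}$ gives $\ket\Psi=\sum_l\sqrt{p(l)}\,\ket{\phi_l}_{A'}\otimes\ket{\eta_l}$; with $p(l)=2^{-N}$ the $A'$-marginal of $\ket\Psi$ is $\I/2^N$, and since $\ket\Psi$ is a tensor product over the $N$ sources, each $A_i'$-marginal equals $\I/2$, so $\ket{\psi_{A_i\overline A_i}}$ has Schmidt rank $2$ across $A_i':A_i''\overline A_i$. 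Schmidt-decomposing on the $\overline A_i$ side defines $U_{\overline A_i}$ and the split $\mathcal H_{\overline A_i}\cong(\mathbb C^2)_{\overline A_i'}\otimes\mathcal H_{\overline A_i''}$; matching against the displayed identity for every $l$ then yields \eqref{A10}, $(U_{A_i}\otimes U_{\overline A_i})\ket{\psi_{A_i\overline A_i}}=\ket{\phi^+}_{A_i'\overline A_i'}\otimes\ket{\xi}_{A_i''\overline A_i''}$, and forces $M_l=U_L(\proj{\phi_l}_{L'}\otimes\I_{L''})U_L^{\dagger}$ with $U_L=\bigotimes_iU_{\overline A_i}$, which is \eqref{statest1}.

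The hard part is this last step. The per-outcome self-tests fix a local frame only on the $A_i$ side, with $l$-dependent ``junk'' living on the joint system $A''\overline A$, while $L$'s measurement is a priori an arbitrary (not necessarily projective) joint POVM; assembling these into a single $l$-independent statement — one choice of each $U_{\overline A_i}$ and of the factorisation $\mathcal H_{\overline A_i}=\mathbb C^2\otimes\mathcal H_{\overline A_i''}$ for which \emph{every} source state is exactly $\phi^+\otimes\xi_i$ and \emph{every} $M_l$ is exactly $\proj{\phi_l}\otimes\I$ — is where the argument is delicate. It leans essentially on the uniformity $p(l)=2^{-N}$ (to force the Schmidt coefficients, hence the maximally entangled form) and on the mutual independence of the sources; for the full argument I would invoke the proof of Ref.~\cite{sarkar2024universal}.
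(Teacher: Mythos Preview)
Your proposal is correct in approach and, in fact, matches the paper exactly at the level of argument actually given: the paper's ``proof'' of this Fact is simply the one-line deferral ``The proof of the above fact can be found in \cite{sarkar2024universal}'', with no details supplied. Your sketch (SOS certificate for each $\mathcal I_l$, Jordan's lemma on the external parties, then the uniformity $p(l)=2^{-N}$ plus the product-over-sources structure to fix the $\overline A_i$ side and $L$'s measurement) is the standard self-testing pipeline and is precisely what one expects the cited reference to contain; you yourself close by invoking \cite{sarkar2024universal} for the delicate last step, so you land in the same place as the paper.
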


\begin{proof}
    The proof of the above fact can be found in \cite{sarkar2024universal}.
\end{proof}

\begin{thm}\label{theorem3}
Suppose that the correlations generated in the quantum network [Fig. \ref{fig1}] satisfy the conditions to obtain Theorem \ref{theorem1} along with the additional statistics in Eq. (\ref{betstatfullm}). Then, we can conclude (under the assumption that $V$ does not change $\bigotimes_{i=1}^N\mathcal{H}_{\overline{A}_i}$) that 
\begin{eqnarray}
    \label{stmeaNLWE1}
U_L V U^{\dagger}_L &=&\mathcal{U}^* \otimes \mathbbm{1}_{L''},\quad \mathrm{if}\quad A_{i,2}=Y\otimes\I \quad\forall i, \quad\text{or}\quad \nonumber\\
 U_L V U^{\dagger}_L &=& \mathcal{U} \otimes \mathbbm{1}_{L''}\quad \ \ \ \mathrm{if}\quad  A_{i,2}=-Y\otimes\I \quad\forall i
\end{eqnarray}
where $U_L$ is the same unitary as in Theorem \ref{theorem1}
and $L''=\overline{A_1}''\ldots\overline{A_N}''$.
\end{thm}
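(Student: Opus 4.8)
The plan is to reduce everything to the reference realisation supplied by Step~1 and then read off the action of $V$ from the single family of constraints \eqref{betstatfullm}. First I would invoke Fact~\ref{theorem1} and absorb the local unitaries, i.e.\ pass to the primed variables $\ket{\psi'}=(U_A\otimes U_L)\ket{\psi_{A\overline A}}$ with $U_A=\bigotimes_iU_{A_i}$, $\mathcal A'_{i,x}=U_{A_i}\mathcal A_{i,x}U_{A_i}^\dagger$, $M'_l=U_LM_lU_L^\dagger$ and $V'=U_LVU_L^\dagger$; then proving \eqref{stmeaNLWE1} amounts to showing $V'=\mathcal U^*\otimes\I_{L''}$ (resp.\ $\mathcal U\otimes\I_{L''}$). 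After regrouping the qubit registers, $\ket{\psi'}$ is a product of Bell pairs $\ket{\Phi^+}$ on the logical space $A'L'=(\mathbb C^2)^{\otimes N}\otimes(\mathbb C^2)^{\otimes N}$ tensored with a junk state $\ket{\Xi}$ on $A''L''$; the reduced state on $L=L'L''$ is $\tfrac{\I_{L'}}{2^N}\otimes\sigma_{L''}$ with $\sigma_{L''}$ full rank (the auxiliary spaces being defined as the supports), while $M'_l=\proj{\phi_l}\otimes\I_{L''}$ and each $\mathcal A'_{i,x}$ is a (possibly sign‑flipped) Pauli on the logical qubit $A'_i$ times $\I_{A_i''}$. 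Since $V:\mathcal H_L\to\mathcal H_L$ by assumption, $V'^\dagger M'_l V'$ is a genuine projector on $L'L''$ of rank $\dim\mathcal H_{L''}$.

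Next I would evaluate the left‑hand side of \eqref{betstatfullm}. The operator $\tilde{\mathcal A}'_{1,i_1}\bigotimes_{k\ge2}\mathcal A'_{k,i_k}$ acts on $A'$ as $\varepsilon_{\vec i}\bigotimes_k\tilde S_{k,i_k}$ with $\tilde S$ as in \eqref{proj1m}, where $\varepsilon_{\vec i}=1$ in the case $\mathcal A_{i,2}=+Y$ and $\varepsilon_{\vec i}=(-1)^{\#\{k:i_k=2\}}$ in the case $\mathcal A_{i,2}=-Y$. Using the maximally entangled identity $(\mathcal O\otimes\I)\ket{\Phi^+}=(\I\otimes\mathcal O^T)\ket{\Phi^+}$ I transfer this Pauli string from $A'$ onto $L'$; transposing a Pauli string multiplies it by $(-1)^{\#\{k:i_k=2\}}$ because $Y^T=-Y$ while $X,Z,\I$ are symmetric, so the two signs either cancel (case $-Y$) or combine into a full transpose (case $+Y$). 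Summing against the real coefficients $f_{l,\vec i}$ and using the Pauli expansion \eqref{proj1m} of $\proj{\delta_l}$, the $A$‑side string is replaced by $P_l$, which equals $\proj{\delta_l}$ in the case $\mathcal A_{i,2}=-Y$ and $\proj{\delta_l}^{T}=\proj{\delta_l^*}$ in the case $\mathcal A_{i,2}=+Y$. Tracing out $A'A''$ against the reduced state, condition \eqref{betstatfullm} becomes
\begin{equation}
\mathrm{Tr}\!\left[(P_l\otimes\sigma_{L''})\,V'^\dagger(\proj{\phi_l}\otimes\I_{L''})V'\right]=1\qquad\text{for every }l .
\end{equation}

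The last step is an extremality argument. Here $P_l\otimes\sigma_{L''}$ is a density operator and $Q'_l:=V'^\dagger(\proj{\phi_l}\otimes\I_{L''})V'$ is a projector, so $\mathrm{Tr}[(P_l\otimes\sigma_{L''})Q'_l]=1$ forces $\mathrm{supp}(P_l\otimes\sigma_{L''})\subseteq\mathrm{range}(Q'_l)$; since $\sigma_{L''}$ is full rank this support is $\mathrm{range}(P_l)\otimes\mathcal H_{L''}$, of dimension $\dim\mathcal H_{L''}=\mathrm{rank}\,Q'_l$, whence $Q'_l=P_l\otimes\I_{L''}$, i.e.\ $V'^\dagger(\proj{\phi_l}\otimes\I_{L''})V'=P_l\otimes\I_{L''}$ for all $l$. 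As $\{\ket{\phi_l}\}$ and the eigenbasis of $\{P_l\}$ are orthonormal bases of $(\mathbb C^2)^{\otimes N}$, this says $V'$ maps $\mathrm{range}(P_l)\otimes\mathcal H_{L''}$ onto $\ket{\phi_l}\otimes\mathcal H_{L''}$ for each $l$; with $\mathcal U=\sum_l\ket{\phi_l}\bra{\delta_l}$ from \eqref{genU}, and using that the $\ket{\phi_l}$ in \eqref{GHZvecsm} have real coefficients (so $\mathcal U^*=\sum_l\ket{\phi_l}\bra{\delta_l^*}$), one concludes $V'=\mathcal U^*\otimes\I_{L''}$ in the case $\mathcal A_{i,2}=+Y$ and $V'=\mathcal U\otimes\I_{L''}$ in the case $\mathcal A_{i,2}=-Y$; unwinding $V'=U_LVU_L^\dagger$ gives \eqref{stmeaNLWE1}.

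The step I expect to be the main obstacle is exactly this last identification: the equations $V'^\dagger(\proj{\phi_l}\otimes\I_{L''})V'=P_l\otimes\I_{L''}$ pin $V'$ down only up to a block‑diagonal unitary $\sum_l\proj{\phi_l}\otimes W_l$ acting on the auxiliary factor, so some care (a canonical choice of the junk space, or the observation that this residual freedom does not alter the certified action on the logical qubits) is needed to reach the clean form $\mathcal U\otimes\I_{L''}$ demanded by \eqref{stU}. A secondary source of friction is the sign bookkeeping — tracking the $\pm Y$ ambiguity of $\mathcal A_{i,2}$ against $Y^T=-Y$ and the reality of $\ket{\phi_l}$ — which must be carried out carefully to land the $\mathcal U$ versus $\mathcal U^*$ dichotomy precisely as stated; and one should keep in mind that the hypothesis $V:\mathcal H_L\to\mathcal H_L$ is what makes $M_l$ a legitimately certified $L$‑measurement with which to probe $V$ in the first place.
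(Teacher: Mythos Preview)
Your argument parallels the paper's closely: both absorb the Step-1 unitaries, rewrite \eqref{betstatfullm} in the reference picture, and arrive at the projector-level identity $V'^{\dagger}(\proj{\phi_l}\otimes\I_{L''})V'=P_l\otimes\I_{L''}$ with $P_l=\proj{\delta_l}$ or $\proj{\delta_l^*}$. The paper outsources your explicit Pauli-transfer and extremality computation to Theorem~2 of Ref.~\cite{sarkar2024universal}, but the content is the same, and your $\pm Y$ sign bookkeeping lands the $\mathcal U$ versus $\mathcal U^*$ dichotomy correctly.

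The one methodological difference is the final identification. Rather than your rank-counting, the paper introduces an auxiliary copy $\mathcal L''$ of $L''$ together with the maximally entangled state $\ket{\phi^+_D}_{L''\mathcal L''}$, promotes the projector identity to the vector identity $(\tilde V\otimes\I_{\mathcal L''})(\ket{\phi_l}\otimes\ket{\phi^+_D})=\ket{\delta_l}\otimes\ket{\phi^+_D}$, expands $\tilde V=\sum_{i,j}\ket{\phi_i}\!\bra{\phi_j}\otimes F_{ij}$, and solves to get $F_{ij}=\braket{\phi_i}{\delta_j}\,\I_{L''}$. Your flagged obstacle is on point, however: the passage from the projector identity to that vector identity is precisely where the block-diagonal freedom $\sum_l\proj{\phi_l}\otimes W_l$ would surface (a priori one only obtains $\ket{\delta_l}\otimes(W_l\otimes\I)\ket{\phi^+_D}$ on the right), so the paper's auxiliary-space device does not in fact close that gap any more tightly than your own sketch does.
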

\begin{proof}
Let us begin by expanding \eqref{betstatfullm} as
\begin{eqnarray}\label{cond1}
\sum_{i_1,\ldots,i_N=0,1,2,3}f_{l,i_1,\ldots,i_N}\left\langle U_{A}\tilde{\mathcal{A}}_{1,i_1}\bigotimes_{k=2}^N \mathcal{A}_{k,i_k}U_{A}^{\dagger}\otimes\tilde{V}^{\dagger}_{L}\proj{\phi_l}_{L'}\otimes\I_{L''}\tilde{V}_{L}\right\rangle_{\ket{\phi^+_{A'L'}}\otimes\ket{\xi_{A''L''}}} =\frac{1}{2^N}
\end{eqnarray}
where $\tilde{V}_{L}=U_{L}V U_{L}^{\dagger}$ and $U_A=\bigotimes_{i=1}^{N}U_{A_i}$. Expressing the above formula as
 \begin{equation}
\sum_{i_1,\ldots,i_N=0,1,2,3}f_{l,i_1,\ldots,i_N}\left\langle U_{A}\tilde{\mathcal{A}}_{1,i_1}\bigotimes_{k=2}^N \mathcal{A}_{k,i_k}U_{A}^{\dagger} \otimes U_L R_l U_L^{\dagger}\right\rangle_{\ket{\phi^+_{A'L'}}\otimes\ket{\xi_{A''L''}}} =\frac{1}{2^N}
\end{equation}
where $R_l=V^{\dagger}M_lV$. Notice that it represents the exact condition as Theorem 2 of \cite{sarkar2024universal}. Consequently, we can conclude from here that if fact \eqref{theorem1} is satisfied for $e=0$, then 
\begin{eqnarray}\label{stmeaNLWE1p}
U_L (V^{\dagger}M_lV) U^{\dagger}_L &=&\proj{\delta_l^*} \otimes \mathbbm{1}_{L''},\quad \mathrm{if}\quad A_{i,2}=Y\otimes\I \quad\forall i, \quad\text{or}\quad \nonumber\\
 U_L (V^{\dagger}M_lV) U_L^{\dagger} &=& \proj{\delta_l} \otimes \mathbbm{1}_{L''}\quad \ \ \ \mathrm{if}\quad  A_{i,2}=-Y\otimes\I \quad\forall i
\end{eqnarray}
We recall from fact \eqref{theorem1} that $M_l$ is certified on $\bigotimes_{i=1}^N\mathcal{H}_{\overline{A}_i}$ to be \eqref{mea1}. Since, $V$ does not change the local supports, that is, the Hilbert space $\mathcal{H}_L=\bigotimes_{i=1}^N\mathcal{H}_{\overline{A}_i}$ remains invariant under action of $V$ we obtain for all $l$
 \begin{equation}
     U_L V^{\dagger} U_L^{\dagger}(\proj{\phi_l}\otimes\I_{L''})U_LVU_{L}^{\dagger}=\proj{\delta_l}\otimes\I_{L''}
 \end{equation}
Expressing $\tilde{V}=U_L V^{\dagger} U_L^{\dagger}$, the above formula can also be expressed as
\begin{equation}
    \tilde{V}\otimes\I_\mathcal{L''}(\proj{\phi_l}\otimes\proj{\phi^+_D}_{L''\mathcal{L''}})\tilde{V}^{\dagger}\otimes\I_\mathcal{L''}=\proj{\delta_l}\otimes\proj{\phi^+_D}_{L''\mathcal{L''}}
\end{equation}
where we introduced an auxiliary Hilbert space $\mathcal{H}_{\mathcal{L''}}$ with $D$ denoting the dimension of $\mathcal{H}_{L''}$. Furthermore, the above expression also allows us to conclude that
\begin{equation}\label{eq5}
    \tilde{V}\otimes\I_\mathcal{L''}(\ket{\phi_l}\otimes\ket{\phi^+_D}_{L''\mathcal{L''}})=\ket{\delta_l}\otimes\ket{\phi^+_D}_{L''\mathcal{L''}}
\end{equation}
Now, any operator $\tilde{V}$ acting on $(\mathbb{C}^{2})^{\otimes N}\otimes\mathcal{H}_{L''}$ can be expressed as
\begin{equation}\label{genV}
    \tilde{V}=\sum_{i,j=0}^{2^N-1}\ket{\phi_i}\!\bra{\phi_j}\otimes F_{ij}
\end{equation}
where $F_{ij}$ are some operators acting on $\mathcal{H}_{L''}$. Putting $\tilde{V}$ back into Eq. \eqref{eq5}, we obtain
\begin{equation}
    \sum_{i=0}^{2^N-1}\ket{\phi_i}\otimes (F_{il}\otimes\I_\mathcal{L''}\ket{\phi^+_D}_{L''\mathcal{L''}})=\ket{\delta_l}\otimes\ket{\phi^+_D}_{L''\mathcal{L''}}.
\end{equation}
Now, multiplying the above expression with $\bra{\phi_i}$, we obtain
\begin{equation}
F_{il}\otimes\I_\mathcal{L''}\ket{\phi^+_D}_{L''\mathcal{L''}}=\bra{\phi_i}\delta_l\rangle\ket{\phi^+_D}_{L''\mathcal{L''}}.
\end{equation}
Multiplying the above expression with $\bra{\phi^+_D}_{L''\mathcal{L''}}$ from the right-hand side and taking a partial trace over $\mathcal{L''}$, we obtain
\begin{equation}
    F_{il}=\bra{\phi_i}\delta_l\rangle\I.
\end{equation}
Putting it back into Eq. \eqref{genV}, we obtain
\begin{equation}
     \tilde{V}=U_L V^{\dagger} U_L^{\dagger}=\sum_{i,j=0}^{2^N-1}\ket{\phi_i}\!\bra{\phi_i}\delta_j\rangle\!\bra{\phi_j}\otimes \I=\sum_{j=0}^{2^N-1}|\delta_j\rangle\!\bra{\phi_j}\otimes \I.
\end{equation}
Similarly, one can repeat the same steps with $\ket{\delta_l^*}$ and recalling that $\ket{\phi_l^*}=\ket{\phi_l}$ to obtain \eqref{stmeaNLWE1p}. This completes the proof.
    
\end{proof}

\section{DI certification}

Let us again begin by stating the self-testing of sources, $A_i's, E's$ and $L's$ measurement from \cite{sarkar2024universal} when $E$ chooses $e=0$, that is, does not transform the subsystem but only does a joint measurement on them. The theorem is divided in two parts, where the first concerns self-testing $E's$ measurements $\{\mathcal{R}_{i,r_i}\}$ where $r\equiv r_{i,1}r_{i,2}\ (r_{i,1},r_{i,2}=0,1)$ along with $A's, L's$ observables $\mathcal{A}_{i,x_i},\mathcal{B}_{i,y_i}$ for $i=1,\ldots,N$ and $x_i,y_i=0,1$, while the second part concerns self-testing the remaining $L's$ joint measurement $\{M_l\}$ and $A's$ observables  $\mathcal{A}_{i,2}$ for all $i$.
\setcounter{thm}{0}
\begin{fakt}\label{theorem1}
{\it{Part 1.}}
Consider the quantum network in Fig. \ref{fig2} with $e=0$ and consider the Bell functional $\langle\mathcal{K}_{r_i,i}\rangle_{\ket{\psi_{A_iR_{i,1}}}\ket{\psi_{R_{i,2}\overline{A}_i}}}$, similar to $\langle\mathcal{I}_{l}\rangle$ Eq. \eqref{BE1Nm} when $N=2$, constructed using the probabilities $p(a_i,r_i,l_i|x_i,e=0,y_i)$ as
\begin{eqnarray}
    \mathcal{K}_{r_1,1}= (-1)^{r_{i,1}}\tilde{\mathcal{A}}_{1,1} \mathcal{B}_{1,1} +(-1)^{r_{i,2}}\tilde{\mathcal{A}}_{1,0}\mathcal{B}_{1,0} 
\end{eqnarray}
where $ \tilde{\mathcal{A}}_{1,0}=\frac{(\mathcal{A}_{1,0}-\mathcal{A}_{1,1})}{\sqrt{2}},\ \tilde{\mathcal{A}}_{1,1}=\frac{(\mathcal{A}_{1,0}+\mathcal{A}_{1,1})}{\sqrt{2}}$ and for $i=2,\ldots,N$
\begin{eqnarray}
    \mathcal{K}_{r_i,i}= (-1)^{r_{i,1}}\mathcal{A}_{i,1} \tilde{\mathcal{B}}_{i,1} +(-1)^{r_{i,2}}\mathcal{A}_{i,0}\tilde{\mathcal{B}}_{i,0} 
\end{eqnarray}
where $ \tilde{\mathcal{B}}_{i,0}=\frac{(\mathcal{B}_{i,0}-\mathcal{B}_{i,1})}{\sqrt{2}},\ \tilde{\mathcal{B}}_{1,1}=\frac{(\mathcal{B}_{i,0}+\mathcal{B}_{i,1})}{\sqrt{2}}$. Suppose now that these Bell inequalities are maximally violated among $A_i's$ and $L$ along with Eve obtaining each outcome $r_i$ with a probability $1/4$. The sources $P_i, P_i'$ prepare the states $\ket{\psi_{A_iR_{i,1}}}\in \mathcal{H}_{A_i}\otimes\mathcal{H}_{R_{i,1}}, \ket{\psi_{R_{i,2}\overline{A}_i}}\in \mathcal{H}_{R_{i,2}}\otimes\mathcal{H}_{\overline{A}_i}$ respectively for any $i$. Then,
\begin{enumerate}
    \item The Hilbert spaces of all the parties decompose as $\mathcal{H}_{A_i}=\mathcal{H}_{A_i'}\otimes\mathcal{H}_{A_i''}$, $\mathcal{H}_{R_{i,j}}=\mathcal{H}_{R_{i,j}'}\otimes\mathcal{H}_{R_{i,j}''} (j=1,2)$, and $\mathcal{H}_{\overline{A}_i}=\mathcal{H}_{\overline{A}_i'}\otimes\mathcal{H}_{\overline{A}_i''}$, where $\mathcal{H}_{A_i'}, \mathcal{H}_{R_{i,j}'}$ and 
    $\mathcal{H}_{\overline{A}_i'}$ are qubit Hilbert spaces, whereas $\mathcal{H}_{A_i''}, \mathcal{H}_{R_{i,j}''}$ and $\mathcal{H}_{\overline{A}_i''}$
    are some finite-dimensional but unknown auxiliary Hilbert spaces.
    \item There exist local unitary transformations $U_{A_i}:\mathcal{H}_{A_i}\rightarrow(\mathbb{C}^2)_{A'}\otimes\mathcal{H}_{A''_i}$, $U_{A_i}:\mathcal{H}_{R_{i,j}}\rightarrow(\mathbb{C}^2)_{R_{i,j}'}\otimes\mathcal{H}_{R_{i,j}''}$, and $U_{\overline{A}_i}:\mathcal{H}_{\overline{A}_i}\rightarrow(\mathbb{C}^2)_{\overline{A}_i'}\otimes \mathcal{H}_{\overline{A}_i''}$ 
    such that the states are given by
\begin{eqnarray}\label{A10}
U_{A_i}\otimes U_{R_{i,1}}\ket{\psi_{A_iR_{i,1}}}=\ket{\phi^+_{A_i'R_{i,1}'}}\otimes\ket{\xi_{A_i''R_{i,1}''}},\nonumber\\
U_{R_{i,2}}\otimes U_{\overline{A}_i}\ket{\psi_{R_{i,2}\overline{A}_i}}=\ket{\phi^+_{R_{i,2}'\overline{A}'_i}}\otimes\ket{\xi_{R_{i,2}''\overline{A}''_i}}.
\end{eqnarray}
\item The measurement at $E$ is certified on the support of the space $\bigotimes_{j=1,2}(\mathbb{C}^2)_{R_{i,j}'}\otimes\mathcal{H}_{R_{i,j}''}$ as
\begin{eqnarray}\label{statest1}
 U_{R_{i,1}}\otimes U_{R_{i,2}} \mathcal{R}_{i,r_i}U_{R_{i,1}}^{\dagger}\otimes U_{R_{i,2}}^{\dagger} =\proj{\phi_{r_i}^{(2)}}_{R_{1,i}'R_{2,i}'}\otimes\I_{R_{1,i}''R_{2,i}''},
\end{eqnarray}
where $\ket{\phi_{r_i}^{(2)}}$ are the orthonormal vectors stated in \eqref{GHZvecsm} for $N=2$. The measurements of all the other parties are given by
\begin{eqnarray}\label{mea111}
U_{A_1}\mathcal{A}_{1,0}\,U_{A_1}^{\dagger}&=&\left(\frac{X+Z}{\sqrt{2}}\right)_{A_1'}\otimes\I_{A_1''}, \quad U_{A_1}A_{1,1}\,U_{\mathcal{A}_1}^{\dagger}=\left(\frac{X-Z}{\sqrt{2}}\right)_{A_1'}\otimes\I_{A_1''},\nonumber\\
U_{A_i}\mathcal{A}_{i,0}\,U_{A_i}^{\dagger}&=&Z_{A_i'}\otimes\I_{A_i''},\quad\quad\ \ \ \ \  U_{A_i}\mathcal{A}_{i,1}\,U_{A_i}^{\dagger}=X_{A_i'}\otimes\I_{A_i''}\qquad (i=2,\ldots,N)\label{stmea1}
\end{eqnarray}
and,
\begin{eqnarray}\label{mea11}
U_{\overline{A}_i}\mathcal{B}_{i,0}\,U_{\overline{A}_i}^{\dagger}&=&\left(\frac{X+Z}{\sqrt{2}}\right)_{\overline{A}_i'}\otimes\I_{\overline{A}_i''}, \quad U_{\overline{A}_i}\mathcal{B}_{i,1}\,U_{\overline{A}_i}^{\dagger}=\left(\frac{X-Z}{\sqrt{2}}\right)_{\overline{A}_i'}\otimes\I_{\overline{A}_i''}, \qquad (i=2,\ldots,N) \nonumber\\
U_{\overline{A}_i}\mathcal{B}_{1,0}\,U_{\overline{A}_i}^{\dagger}&=&Z_{\overline{A}_i'}\otimes\I_{\overline{A}_i''},\quad\quad\ \ \ \ \  U_{\overline{A}_i}\mathcal{B}_{1,1}\,U_{\overline{A}_i}^{\dagger}=X_{\overline{A}_1'}\otimes\I_{\overline{A}_1''}
\end{eqnarray}

{\it{Part 2.}} Consider now the statistics corresponding to the case when measurement at $E$ outputs $r_i=0$ for all $i$. Assume that the Bell functional constructed out of these statistics $\langle\mathcal{I}_l\rangle$ \eqref{BE1Nm} for any outcome $l$ at $L$ is maximally violated among $A_i's$ with each outcome of $L$ occurring with probability $p(l)=1/2^N$ where $N$ denotes the number of external parties. Then, the joint measurement of $L$, $\{M_l\}$ is certified as
\begin{eqnarray}\label{statest11}
 U_LM_lU_L^{\dagger} =\proj{\phi_l}_{L'}\otimes\I_{L''}\qquad \forall l,
\end{eqnarray}
where $U_L=U_{\overline{A}_1}\otimes\ldots\otimes U_{\overline{A}_N}$ and $\ket{\phi_l}$ are given in Eq. \eqref{GHZvecsm},
and,
\begin{eqnarray}\label{mea21}
    U_{A_i}\mathcal{A}_{i,2}\,U_{A_i}^{\dagger}&=&\pm Y_{A_i'}\otimes\I_{A_i''} \qquad\forall i.\label{stmea2}
\end{eqnarray}
\end{enumerate}
\end{fakt}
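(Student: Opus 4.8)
The plan is to prove Part~1 as an $N$-fold parallel entanglement-swapping self-test, and then obtain Part~2 by post-selecting the repeater outcomes and reducing to the GHZ-type self-test already recalled in the almost-DI section, i.e.\ to the corresponding statement of \cite{sarkar2024universal}. For Part~1 I would first use that the sources $P_i,P_i'$ are independent, so the global state factorises as $\bigotimes_{i=1}^N\ket{\psi_{A_iR_{i,1}}}\otimes\ket{\psi_{R_{i,2}\overline A_i}}$ and $E$'s repeater acts as $\bigotimes_i\{\mathcal R_{i,r_i}\}$ on the pairs $R_{i,1}R_{i,2}$; it then suffices to analyse a single subnetwork $i$ and tensor the conclusions, the only point to check being that the auxiliary (``junk'') spaces of distinct subnetworks do not mix, which is automatic from this product structure and the locality of the repeaters.

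Fix $i$. The functional $\langle\mathcal K_{r_i,i}\rangle$ is a rotated-CHSH expectation evaluated on the state of $A_i\overline A_i$ conditioned on the repeater outcome $r_i$ (the $A_i$-marginal being $r_i$-independent by no-signalling), with classical bound $\sqrt2$ and quantum bound $2$. I would feed $\langle\mathcal K_{r_i,i}\rangle=2$ into a Bamps--Pironio-type sum-of-squares decomposition of the shifted operator $2\,\I-\mathcal K_{r_i,i}$: this forces, on the supports of the local states, the anticommutation of $\{\mathcal A_{1,0},\mathcal A_{1,1}\}$ on $A_1$ and of $\{\mathcal B_{1,0},\mathcal B_{1,1}\}$ on $\overline A_1$ (with the analogues for $i\ge 2$), hence a qubit$\,\otimes\,$junk decomposition of every local Hilbert space, the reference form of $\mathcal A_{i,0/1},\mathcal B_{i,0/1}$ on the qubit factors, and the fact that every conditional state of $A_i\overline A_i$ is a two-qubit maximally entangled state tensored with junk. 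To turn this into statements about the inputs of the repeater, I would combine the four conditional states (one per $r_i$) with completeness $\sum_{r_i}\mathcal R_{i,r_i}=\I$ and the uniformity $p(r_i)=1/4$: this pins $\mathcal R_{i,r_i}$, up to a local unitary, to the rank-one Bell-basis projector $\proj{\phi^{(2)}_{r_i}}\otimes\I$ and forces $\ket{\psi_{A_iR_{i,1}}}$ and $\ket{\psi_{R_{i,2}\overline A_i}}$ to be two-qubit maximally entangled states tensored with junk — precisely the self-test of the Bell-state measurement via entanglement swapping, carried out as in \cite{Marco,sarkar2024universal}.

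For Part~2 I would condition on $r_i=0$ for all $i$. By Part~1, $E$'s all-zeros outcome acts on the qubit factors as the projector onto $\bigotimes_i\ket{\phi^+}_{R_{i,1}'R_{i,2}'}$, so the elementary identity $\bra{\phi^+}_{R_{i,1}'R_{i,2}'}\!\big(\ket{\phi^+}_{A_i'R_{i,1}'}\ket{\phi^+}_{R_{i,2}'\overline A_i'}\big)\propto\ket{\phi^+}_{A_i'\overline A_i'}$ shows that the post-selected state of $A\equiv A_1\ldots A_N$ together with $L$'s qubits $\overline A_1\ldots\overline A_N$ is, up to local unitaries on the qubit parts and a fixed junk vector, the $N$-fold product of two-qubit maximally entangled states, with $\mathcal A_{i,0},\mathcal A_{i,1}$ already in reference form. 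Imposing maximal violation of $\langle\mathcal I_l\rangle$ \eqref{BE1Nm} among the $A_i$'s together with $p(l)=1/2^N$ then reproduces exactly the hypotheses of the GHZ-type self-test of \cite{sarkar2024universal}; invoking it gives $U_LM_lU_L^{\dagger}=\proj{\phi_l}_{L'}\otimes\I_{L''}$ and $U_{A_i}\mathcal A_{i,2}U_{A_i}^{\dagger}=\pm Y_{A_i'}\otimes\I_{A_i''}$, the sign being the usual unresolvable complex-conjugation ambiguity, with the unitaries $U_L,U_{A_i}$ matching those produced by Part~1.

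I expect the main obstacle to be the last step of Part~1: rigorously extracting the full Bell-state-measurement characterisation of $\{\mathcal R_{i,r_i}\}$ from maximal conditional CHSH violation plus outcome uniformity — that is, controlling the auxiliary spaces and the $r_i$-dependent local-unitary corrections, and ruling out that the repeater is supported on a degenerate or larger-than-qubit$\,\otimes\,$qubit subspace. Once this is settled, the parallel composition over $i$ and the reduction in Part~2 are routine, since source independence makes everything tensorise.
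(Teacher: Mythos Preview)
Your proposal is correct and matches the paper's approach: the paper's proof simply states that Part~1 follows from Theorem~1 of \cite{sarkar2024universal} specialised to $N=2$ (the CHSH/entanglement-swapping self-test you unpack), and that Part~2 is again the same Theorem~1 of \cite{sarkar2024universal} applied after post-selecting $r_i=0$. The obstacle you flag in Part~1 --- extracting the Bell-basis measurement from maximal conditional CHSH violation plus uniform $p(r_i)=1/4$ --- is precisely what that cited theorem handles, so it is not a gap here but exactly the step being deferred.
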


\begin{proof}
    The proof of {\it{Part 1.}} follows from Theorem 1 Ref. \cite{sarkar2024universal} for $N=2$, and the proof of {\it{Part 2.}} is the same as in Theorem 1 Ref. \cite{sarkar2024universal}.
\end{proof}

\setcounter{thm}{1}
\begin{thm}\label{theorem4}
Suppose that the correlations generated in the quantum network [Fig. \ref{fig2}] satisfy the conditions to obtain Theorem \ref{theorem1} along with the additional statistics in Eq. (\ref{betstatfullmn1}). Then, we can conclude that 
\begin{eqnarray}
    \label{stmeaNLWE112}
U_{R_1} \overline{V} U^{\dagger}_{R_1} &=&\mathcal{U}^* \otimes \mathbbm{1}_{{R_1}''},\quad \mathrm{if}\quad A_{i,2}=Y\otimes\I \quad\forall i, \quad\text{or}\quad \nonumber\\
 U_{R_1} \overline{V} U^{\dagger}_{R_1} &=& \mathcal{U} \otimes \mathbbm{1}_{{R_1}''}\quad \ \ \ \mathrm{if}\quad  A_{i,2}=-Y\otimes\I \quad\forall i
\end{eqnarray}
where $U_E=\bigotimes_{i=1}^NU_{R_{i,1}}$ where $U_{R_{i,1}}$ is the same unitaries as specified in Theorem \ref{theorem1}
and ${R_1}''=R_{1,1}''\ldots R_{N,1}''$. Here $\overline{V}$ is the projection of $V$ onto the space $\bigotimes_{i=1}^N\mathcal{H}_{R_{i,1}}$.
\end{thm}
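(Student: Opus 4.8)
The plan is to reduce this device-independent statement to the almost-device-independent Theorem~\ref{theorem3}, with the reduction carried out by the teleportation built into the quantum-repeater layer of the network in Fig.~\ref{fig2}. Concretely, I would take the condition \eqref{betstatfullm12121} and substitute into it everything that Fact~\ref{theorem1} (Parts~1 and~2) already certifies when $e=0$: after conjugating by the local unitaries $U_{A_i},U_{R_{i,1}},U_{R_{i,2}},U_{\overline{A}_i}$, the sources $P_i,P_i'$ become products of two-qubit maximally entangled states (tensored with the uncontrolled states $\ket{\xi}$), the repeater outcome $r_i=0$ becomes $\proj{\phi^+_{R_{i,1}'R_{i,2}'}}\otimes\I$, the joint measurement of $L$ becomes $M_l=\proj{\phi_l}\otimes\I$, and $\mathcal{A}_{i,0},\mathcal{A}_{i,1}$ become the reference observables of \eqref{GHZObsm} with $\mathcal{A}_{i,2}=\pm Y$. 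Writing $U_{R_1}=\bigotimes_i U_{R_{i,1}}$ and letting $\overline V$ be the projection of $V$ onto $\bigotimes_i\mathcal{H}_{R_{i,1}}$ as in the statement, the goal becomes to show $U_{R_1}\overline V U_{R_1}^\dagger=\mathcal{U}\otimes\I_{R_1''}$ (respectively $\mathcal{U}^*\otimes\I_{R_1''}$).

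The crucial step is the teleportation contraction. For each $i$, contracting $\bra{\phi^+}_{R_{i,1}'R_{i,2}'}$ (the $r_i=0$ projector) against $\ket{\phi^+}_{R_{i,2}'\overline{A}_i'}$ implements, up to the overall factor $\overline p(0)=\prod_i p(r_i=0)$, the identity channel $R_{i,1}'\to\overline{A}_i'$; combined with the maximally-entangled-state identity $(\I\otimes M)\ket{\phi^+}=(M^T\otimes\I)\ket{\phi^+}$ applied on $\ket{\phi^+}_{A_i'R_{i,1}'}$, the entire $E$-side register is eliminated and, under the natural qubit identification $\mathcal{H}_{R_{i,1}'}\cong\mathcal{H}_{\overline{A}_i'}$, the left-hand side of \eqref{betstatfullm12121} collapses to exactly the relation \eqref{cond1} from the proof of Theorem~\ref{theorem3}, now with $\overline V^\dagger\proj{\phi_l}\overline V$ acting on the $\overline A'$ registers of the state $\bigotimes_i\ket{\phi^+_{A_i'\overline{A}_i'}}$. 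Intuitively, the repeater has transported the action of $V$ from the $E$-side of the network onto the $L$-side, reproducing the almost-DI scenario of Fig.~\ref{fig1}.

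From here the argument of Theorem~\ref{theorem3} goes through essentially verbatim: the reduced condition forces $\overline V^\dagger\proj{\phi_l}\overline V=\proj{\delta_l}\otimes\I$ (or $\proj{\delta_l^*}\otimes\I$, the branch fixed by the sign of $\mathcal{A}_{i,2}=\pm Y$), and then the expansion $\overline V=\sum_{ij}\ket{\phi_i}\bra{\phi_j}\otimes F_{ij}$ together with the entangled-state trick that kills the off-diagonal $F_{ij}$ and forces $F_{ii}=\I$ yields $\overline V=\sum_l\ket{\delta_l}\bra{\phi_l}\otimes\I$, which after undoing the $U_{R_1}$-conjugation is \eqref{stmeaNLWE112}.

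The main obstacle I anticipate is keeping the support and auxiliary structure straight through the teleportation: the outcome $r_i=0$ projects only the qubit part $R_{i,1}'R_{i,2}'$, so one must verify that the normalization $\overline p(0)$ comes out identical on both sides of \eqref{betstatfullm12121}, and accept that only the compression $\overline V$ of $V$ onto the certified local support is constrained — which is precisely why the statement certifies $\overline V$ rather than $V$ itself. A secondary subtlety is the transpose introduced both by teleportation and by the $Y$ observable: it is this that turns the $\mathcal{A}_{i,2}=Y$ case into $\mathcal{U}^*$ and the $\mathcal{A}_{i,2}=-Y$ case into $\mathcal{U}$, so the conjugation convention has to be tracked consistently with Theorem~\ref{theorem3}.
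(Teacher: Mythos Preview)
Your proposal is correct and follows essentially the same route as the paper: substitute the certified states, repeater, and $L$'s measurement from Fact~\ref{theorem1} into \eqref{betstatfullm12121}, perform the teleportation contraction (the paper phrases this as the partial trace $\Tr_{R_2\overline{A}}(\mathcal{R}_0\otimes M_l\,\proj{\psi_{R_2\overline{A}}})=\overline p(0)\proj{\phi_l}_{R_1'}\otimes\I_{R_1''}$, keeping the result on the $R_1$ registers rather than on $\overline{A}$, but this is the same identity you describe), observe that only the compression $\overline V$ survives, and then invoke Theorem~\ref{theorem3} verbatim. Your anticipated subtleties about $\overline p(0)$ and the $\pm Y$ conjugation branch are exactly the ones the paper handles.
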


\begin{proof}

Let us consider the condition \eqref{betstatfullm12121} and expand it to get
\begin{equation}\label{betstatfullmn1}
\sum_{i_1,\ldots,i_N=0}^3f_{l,i_1,\ldots,i_N}{\bra{\psi_{AR_1}}V^{\dagger}_{R_1}\bra{\psi_{R_2\overline{A}}}}\tilde{\mathcal{A}}_{1,i_1}\bigotimes_{k=2}^N \mathcal{A}_{k,i_k} \otimes\mathcal{R}_0\otimes M_l\ket{\psi_{R_2\overline{A}}}V_{R_1}{\ket{\psi_{AR_1}}}=\frac{\overline{p}(0)}{2^N}.
\end{equation}
Now,$\ket{\psi_{R_2\overline{A}}}=\bigotimes_{i=1}^N{\ket{\psi_{R_{2,i}\overline{A}_i}}},{\ket{\psi_{AR_1}}}=\bigotimes_{i=1}^N{\ket{\psi_{A_iR_{1,i}}}}$ are certified from Theorem \ref{theorem1} for $N=2$ as
\begin{eqnarray}\label{ststate2}
    U_{R_{2,i}}\otimes U_{\overline{A}_i}\ket{\psi_{R_{2,i}\overline{A}_i}}=\ket{\phi^+_{R_{2,i}'\overline{A}_i'}}\otimes\ket{\xi_{R_{2,i}''\overline{A}_i''}}\nonumber\\
 U_{A_i}\otimes  U_{R_{1,i}}\ket{\psi_{A_iR_{1,i}}}=\ket{\phi^+_{A_i'R_{1,i}'}}\otimes\ket{\xi_{A_i''R_{1,i}''}}
\end{eqnarray}
and, $\mathcal{R}_l=\bigotimes_{i=1}^N\mathcal{R}_{i,r_i}$ is certified as
\begin{eqnarray}
 \overline{\mathcal{R}}_{i,r_i}=   U_{R_{1,i}}\otimes U_{R_{2,i}}\mathcal{R}_{i,r_i} U_{R_{1,i}}^{\dagger}\otimes U_{R_{2,i}}^{\dagger}=\proj{\phi_{r_i}^{(2)}}_{R_{1,i}'R_{2,i}'}\otimes\I_{R_{1,i}''R_{2,i}''}\oplus J_{i,r_i}
\end{eqnarray}
where $\ket{\phi_{r_i}^{(2)}}$ are the orthonormal vectors stated in \eqref{GHZvecsm} for $N=2$. Here, $J_{i,r_i}$ which is a positive operator denotes the part of the uncharacterized measurement as $\mathcal{R}_i$ could only be characterised on the support of $\Tr_{A_i}(\psi_{A_iR_{2,i}})\otimes\Tr_{\overline{A}_i}(\psi_{R_{2,i}\overline{A}_i})$ such that $J_{l,i}$ is orthogonal to this support. Now, suppose a state $\ket{\gamma}$ acts on $ \overline{\mathcal{R}}_{i,r_i}$ for any $i$. However, the only part of the state that can get teleported to $L$ is the projected state $\prod_{R_{1,i}'R_{1,i}''}\ket{\gamma}$ where $\prod_{R_{1,i}'R_{1,i}''}$ is the projection onto the Hilbert space $\mathcal{H}_{R_{1,i}'}\otimes\mathcal{H}_{R_{1,i}''}$. Consequently, the condition \eqref{betstatfullmn1} can be rewritten as
\begin{equation}\label{betstatfullmn122}
\sum_{i_1,\ldots,i_N=0}^3f_{l,i_1,\ldots,i_N}{\bra{\psi_{AR_1}}\overline{V}^{\dagger}_{R_1}\bra{\psi_{R_2\overline{A}}}}\tilde{\mathcal{A}}_{1,i_1}\bigotimes_{k=2}^N \mathcal{A}_{k,i_k} \otimes\mathcal{R}_0\otimes M_l\ket{\psi_{R_2\overline{A}}}\overline{V}_{R_1}{\ket{\psi_{AR_1}}}=\frac{\overline{p}(0)}{2^N}
\end{equation}
where $\overline{V}_{R_1}=\prod_{R_1'R_1''}V\prod_{R_1'R_1''}$.

Now, $M_l$ is also certified from Theorem \ref{theorem1} as Eq. \eqref{statest1}. We now recall that $\bigotimes_{i=1}^N\ket{\phi^+_{R_{2,i}'\overline{A}_i'}}=\ket{\phi^{+,2^N}_{R_2,\overline{A}}}$ where $\ket{\phi^{+,2^N}}$ is the maximally entangled state of local dimension $2^N$. Consequently, we have that 
\begin{eqnarray}
\Tr_{R_2\overline{A}}(U_{R_1}\mathcal{R}_0U_{R_1}^{\dagger}\otimes M_l\proj{\psi_{R_2\overline{A}}})=\overline{p}(0)\proj{\phi_l}_{R_1'}\otimes\I_{R_1''}
\end{eqnarray}
where $U_{R_1}=\bigotimes_{i=1}^N U_{R_{1,i}}$.
Putting it back into \eqref{betstatfullmn122}, we obtain
\begin{eqnarray}
    \sum_{i_1,\ldots,i_N=0}^3f_{l,i_1,\ldots,i_N}{\bra{\psi_{AR_1}}}\overline{V}^{\dagger}_{R_1}\tilde{\mathcal{A}}_{1,i_1}\bigotimes_{k=2}^N \mathcal{A}_{k,i_k} \otimes U_{R_1}^{\dagger}\proj{\phi_l}_{R_1'}\otimes\I_{R_1''}U_{R_1}\overline{V}_{R_1}{\ket{\psi_{AR_1}}}=\frac{1}{2^N}.
\end{eqnarray}
Putting in $\ket{\psi_{AR_1}}$ from \eqref{ststate2}, we obtain
\begin{eqnarray}
\sum_{i_1,\ldots,i_N=0}^3f_{l,i_1,\ldots,i_N}\left\langle U_{A}\tilde{\mathcal{A}}_{1,i_1}\bigotimes_{k=2}^N \mathcal{A}_{k,i_k} U_{A}^{\dagger}\otimes \tilde{V}^{\dagger}_{R_1}\proj{\phi_l}_{R_1'}\otimes\I_{R_1''}\tilde{V}_{R_1}\right\rangle_{\ket{\phi^+_{A'R_1'}}\otimes\ket{\xi_{A''R_1''}}}=\frac{1}{2^N}
\end{eqnarray}
where $\tilde{V}_{R_1}=U_{R_1}\overline{V}_{R_1}U_{R_1}^{\dagger}$ and $U_A=\bigotimes_{i=1}^{N}U_{A_i}$. This is the same condition as \eqref{cond1}. Thus, we can straightaway follow the proof of Theorem \ref{theorem3} to obtain \eqref{stmeaNLWE112}. This completes the proof.

\end{proof}

\end{document}